\newtheorem{assumption}{Assumption}
\newtheoremstyle{boldremark}
    {\dimexpr\topsep/2\relax} 
    {\dimexpr\topsep/2\relax} 
    {}          
    {}          
    {\bfseries} 
    {.}         
    {.5em}      
    {}          
\DeclareMathOperator{\E}{\mathrm{E}}
\DeclareMathOperator{\var}{\mathrm{var}}
\DeclareMathOperator{\cov}{\mathrm{cov}}
\newcommand\norm[1]{\left\lVert#1\right\rVert}
\newcommand\dd{\mathop{}\!\mathrm{d}}
\newcommand\indep{\protect\mathpalette{\protect\independenT}{\perp}}
\def\independenT#1#2{\mathrel{\rlap{$#1#2$}\mkern2mu{#1#2}}}
\newtheorem{theorem}{Theorem}
\newtheorem{lemma}{Lemma}
\newtheorem{corollary}{Corollary}
\newtheorem{alemma}{Lemma}[section]
\theoremstyle{boldremark}
\newcommand{\nothere}[1]{}
\title{Causal effect on the number of life years lost due to a specific event: Average treatment effect and variable importance}
\author{Simon Christoffer Ziersen $\&$ Torben Martinussen \\ \\ \small \textit{Section of Biostatistics, University of Copenhagen}}
\date{}
\begin{document}

\maketitle

\begin{abstract}
Competing risk is a common phenomenon when dealing with time-to-event outcomes in biostatistical applications. An attractive estimand  in this setting is the "number of life-years lost due to a specific cause of death", \textcite{andersen}. It provides a direct interpretation on the time-scale on which the data is observed. In this paper, we introduce the causal effect on the number of life years lost due to a specific event, and we give assumptions under which the average treatment effect (ATE) and the conditional average treatment effect (CATE) are identified from the observed data. Semiparametric estimators for ATE and a partially linear projection of CATE, serving as a variable importance measure, are proposed. These estimators leverage machine learning for nuisance parameters and are model-agnostic, asymptotically normal, and efficient. We give conditions under which the estimators are asymptotically normal, and their performance are investigated in a simulation study. Lastly, the methods are implemented in a study concerning the response to different antidepressants using data from the Danish national registers.   
\end{abstract}

\textbf{Keywords:} \textit{Causal inference, number of life years lost, debiased learning, heterogeneity, nonparametric inference, survival data, competing risks, variable importance measure}

\section{Introduction} 

Time-to-event outcomes are very common in the biomedical sciences, and often there is the additional complication of competing events. For example, \textcite{kessing} were interested in the time to non-response to different antidepressants for patients with a major depressive disorder. The event of interest was defined as a switch to or add-on of another antidepressant or antipsychotic medicine or readmission to a psychiatric hospital with a major depressive disorder, while admission to a psychiatric hospital with a higher order psychiatric diagnosis (bipolar disorder, schizophrenia or organic mental disorder) or death were competing risks. One purpose of that study was to
compare a potential causal effect of treatment with the antidepressant Setraline versus treatment with another antidepressant, Escitalopram, and to determine whether such an effect was dependent on individual characteristics.
With competing risk data its standard to target the average treatment effect (ATE) based on the 
cumulative incidence functions associated with the competing events. Estimation of the ATE in this setting has been described by  \textcite{ozenne} under working Cox models and logistic regression for the nuisance parameters, whereas \textcite{heleneFrank} and \textcite{heleneMark} provided estimators of the ATE under data-adaptive nuisance estimators. 

\textcite{andersen} suggested an alternative estimand that has an attractive interpretation.  Within a given time frame $[0,t]$ it aims, in the classical setting of competing causes of death, to decompose the years of life lost (in the considered time window) due to the specific causes of death. In our working example, where the first two years after the initiation of treatment were of most interest, it corresponds to healthy days lost before two years after the   treatment initiation
due to the primary cause and the competing causes.
In this paper, we introduce the ATE based on this quantity - number of life years lost due to a specific event. A further benefit of this estimand is that it has an interpretation directly on the timescale inherent in the data, and it may further capture possible "early" effects of treatment. We provide an estimator of the ATE based on semiparametric efficiency theory, allowing for the use of machine learning methods for nuisance parameter estimation without relying on model-specifications. We further give assumptions on the nuisance estimators under which the proposed ATE estimator is asymptotically normal and nonparametric locally efficient. 

\textcite{andersenParner} provide a general procedure for estimating the ATE with censored data, where the ATE based on the number of life years lost due to a specific event serves as an example. Their estimator is based on pseudo-observations and its asymptotic properties in the presence of covariate dependent censoring are derived in \textcite{overgaard2019}. Specifically, their estimator relies on a correctly specified parametric model for the conditional average treatment effect along with a correctly specified model for the censoring distribution. In contrast, our estimator does not rely on such parametric model specifications, and it allows for estimation of the involved nuisance parameters based on machine learning, while still providing inference for the ATE estimate. Importantly, we also extend the treatment effect variable importance measure given in \textcite{zm} as a best partially linear projection of the conditional average treatment effect. The projection parameter provides an attractive  measure of potential treatment effect heterogeneity through a given covariate - this is of great interest in many real applications as the one we present in Section \ref{sec:application} on evaluating antidepressant treatment effects as mentioned above. We develop  an estimator of this estimand that is also semiparametrically efficient and allows for machine learning methods to estimate nuisance parameters. The estimator admits an asymptotic normal distribution which defines a test of treatment effect heterogeneity of a given covariate. 

In Section 2, we state the notation and setup used in the paper, and in Section 3 we define two target parameters. Section 4 gives the efficient influence functions for the two target parameters and utilize these to construct cross-fitted one-step estimators. The asymptotic distributions of the estimators are then proved under high level assumptions on the nuisance parameter estimates. In Section 5, the finite sample performance of the proposed estimators are investigated in a simulation study and in Section 6 we apply the estimators to a study on treatment response to different antidepressants based on data from the Danish national registers (\cite{kessing}). Section 7 concludes the paper with some final remarks.

\section{Notation and setup}
We consider a time-to-event setting with competing risks. Let $T$ be the time to event and $\Delta \in \{1,2\}$ the event indicator for two competing events. Let $X$ be a $d$-dimensional vector of covariates, and let $A$ denote the baseline treatment indicator. We enforce censoring through a censoring time $C$, such that the observed event time is $\tilde{T} = T \wedge C$ and the observed event indicator is $\tilde{\Delta} = \mathbb{1}(C\geq T)\Delta$. Our observed data, $\mathcal{O}$, is given by $n$ i.i.d. copies of $O = (\tilde{T}, \tilde{\Delta}, A, X)\sim P_0$, where $P_0 \in \mathcal{M}$, with $\mathcal{M}$ being a nonparametric model. 

We introduce the conditional cause-specific hazard functions, $\lambda_{0, j}(t\mid a, x)$, for the $j$'th cause, $j=1,2$, and let $\lambda_{0,c}(t\mid a,x)$ denote the censoring hazard function. We let $\Lambda_{0, j}(t\mid a, x) = \int_0^t \lambda_{0, j}(s\mid a, x)\dd s$ and $\Lambda_{0, C}(t\mid a, x) = \int_0^t \lambda_{0, C}(s\mid a, x) \dd s$ denote the corresponding cumulative hazard functions. We denote $S(t\mid a, x) = \exp\{-\Lambda_1(t\mid a,x) - \Lambda_2(t\mid a,x)\}$ the survival function, and $\pi_0(a\mid x) = P_0(A = a\mid X=x)$ the conditional distribution of $A$ given $X$. Let $N_j(t) = \mathbb{1}(\tilde{T} \leq t, \tilde{\Delta} = j)$ denote the observed counting process for the $j$'th event and let $M_j(t\mid a,x)$ denote the corresponding martingale conditional on $A=a$ and $X=x$ such that $M_j(\dd t\mid a,x) = N_j(\dd t) - \mathbb{1}(\tilde{T}\geq t)\Lambda_{0,j}(\dd t\mid a,x)$. Furthermore, we introduce the cause-specific event times $T_j$, $j = 1,2$, and let $T_j^a$, $a = 0,1$, denote the counterfactual time corresponding to the $j$'th cause.

We use the notation $Pf = \int f \dd P$ and $\mathbb{P}_n f = \sum_{i=1}^n f(O_i)$, and $E_0 \{ f(O) \} = \int f \dd P_0$ is the expectation of $f(O)$ under the true data generating distribution. Throughout, all expectations, $P\hat{f}$, considers the function $\hat{f}$ fixed, even when it is estimated from the data, unless otherwise specified. Finally, $\norm{\cdot}$ denotes the $L_2(P)$-norm, such that $\norm{f} = \left( \int f^2 \dd P \right)^{1/2}$.

\section{Causal estimand and nuisance parameters}\label{sec:estimands}
Inspired by \textcite{andersen}, we introduce 
$$
L_0(0,t^*|a,x)=t^*-\int_0^{t^*}S(u|a,x)du
$$
for a given time-horizon $[0, t^*]$, which can be interpreted as the expected number of years lost before time $t^*$ in strata $(a,x)$. As \textcite{andersen} shows, this quantity can be decomposed naturally into 
$$
L_0(0,t^*|a,x)=L_1(0,t^*|a,x)+L_2(0,t^*|a,x)
$$
where 
$$
L_j(0,t^*|a,x)=\int_0^{t^*}F_j(u|a,x)du,\quad j=1,2,
$$
can be interpreted as number of years lost "due to cause $j$" (\cite{andersen}), with $F_j$ being the $j$th cumulative incidence function given $A=a$ and $X=x$, i.e. $F_j(t\mid a, x) = \int_0^t S(s\mid a, x)\dd \Lambda_j(s\mid a, x)$.
To introduce the counterfactual number of life years lost due to a specific event, we first remark on an observation given in \textcite{andersen}. The random variable $T_j$ is improper because $P(T_j = \infty) > 0$, but the random variable $T_j\wedge t^*$ is proper with expectation given by
$$
E\{T_j\wedge t^*\} = t^* - \int_0^{t^*} F_j(s) \dd s,
$$
and hence 
$$
E(T_j\wedge t^* \mid a, x ) = t^* - \int_0^{t^*} F_j(s\mid a, x) \dd s.
$$
We now introduce the counterfactual $Y^a_j(t^*) = t^* - T_j^a \wedge t^*$ for $a=0,1$, which is the number of life-years lost due to event $j$ before time $t^*$ under treatment $a$. We define the $j$'th-specific ATE as 
$$
E_0\{Y_j^1(t^*) - Y_j^0(t^*)\}
$$
and the CATE is
$$
E_0(Y_j^1(t^*) - Y_j^0(t^*)\mid X=x).
$$
In order to identify the ATE and CATE from the observed data we need the following assumptions:
\begin{NoHyper}
\begin{assumption}[Identification]\label{ass:ident}
    \phantom{something}
    \begin{enumerate}[label=\ref{ass:ident}\arabic*]
        \item \label{ass:cons}
        (Consistency) $Y_j(t^*) = t^* - T_j\wedge t^* = AY_j^1(t^*) + (1-A)Y_j^0(t^*)$ conditional on $A$. 
        \item \label{ass:exch}
        (Exchangeability) $Y^a_j(t^*) \indep A \mid X, \ a=0,1.$ 
        \item \label{ass:pos}
        (Positivity) $\pi(a\mid x)P_0(C>t| A=a, X=x)P_0(T > t| A=a,X=x) > \eta > 0, \ \forall (t,x) \in [0,t^*]\times \mathcal{X}, \ a=0,1$. 
        \item \label{ass:indepCens}
        (Independent censoring) $T\indep C \mid A, X$.
    \end{enumerate}
\end{assumption}
\end{NoHyper}

Define
$$
\tau_j(x;t^*) \equiv L_j(0,t^*|1,x)- L_j(0,t^*|0,x),\quad j=1,2.
$$
In Appendix \ref{app:identification}, we show that the CATE function is identified in the observed data because
$$
\tau_j(x;t^*) = E_0(Y_j^1(t^*) - Y_j^0(t^*)\mid X=x),
$$
and the average treatment effect as
$$
E_0\{\tau_j(X;t^*)\} = E_0\{Y_j^1(t^*) - Y_j^0(t^*)\}
$$
under assumption \ref{ass:ident}. Going forward we drop the dependence of $t^*$ and write $\tau_j(x) = \tau_j(x;t^*)$ to ease notation. Based on the identification results, we define two target parameters as mappings from the model $\mathcal{M}$ on the observed data to the reals. The first parameter is the $j$-specific average treatment effect, defined as the mapping $\psi_j : \mathcal{M}\rightarrow \mathbb{R}$, where
$$
\psi_j(P) = E\{L_j(0,t^*|1,X)- L_j(0,t^*|0,X) \}.
$$
The second parameter is defined as a variable importance measure of the $l$'th covariate on $\tau_j(x)$ based on the best partially linear projection given in \textcite{zm}. It is defined as the mapping $\Omega_j^l: \mathcal{M}\rightarrow \mathbb{R}$ with
$$
\Omega_j^l(P) = \frac{E\{\cov(X_l, \tau_j(X) \mid X_{-l})\}}{E\{\var(X_l\mid X_{-l})\}},
$$
where $X_{-l}$ denotes the covariates indexed by $\{1, \ldots, d\} \setminus \{l\}$. The parameter $\Omega_j^l$ can be viewed as a weighted average of the conditional covariance of the CATE and the covariate $X_l$ given the rest of the covariates. The parameter is dependent on the scale of the covariate in question, and when assessing the importance of different covariates it is not the estimate of the parameter that determines the ranking of variable importance, but rather the p-value associated with test $H:\Omega_j^l = 0$, since $\Omega_j^l$ is zero if there is no heterogeneity explained by $X_l$. The parameter can be derived as the least-squares projection of the CATE onto the partially linear model. 

\section{Estimation and inference}\label{sec:est}
We base the estimation of the two target parameters, $\psi_j(P)$ and $\Omega_j^l(P)$, on semiparametric efficiency theory (\cite{bickel}, \cite{Vaart} Ch. 25, \cite{vanRose}). 

For a general target parameter $\psi$, an estimator $\hat{\psi}$ is said to be asymptotically linear if it can be written on the form $\hat{\psi} - \psi = \mathbb{P}_n\mathbb{IF} + o_p(n^{-1/2})$ with $P\mathbb{IF}=0$. The function $\mathbb{IF}$ is called the influence function of the estimator $\hat{\psi}$, and it characterizes the asymptotic distribution of the estimator. This can be seen by applying the central limit theorem together with Slutsky's lemma, from which $\sqrt{n}(\hat{\psi}-\psi) \overset{D}{\rightarrow} \mathcal{N}(0, P\mathbb{IF}^2)$. If the target parameter is differentiable at $P$ as a map $\psi:\mathcal{M}\rightarrow \mathbb{R}$, there exist a unique function, say $\tilde{\psi}$, associated to the pathwise derivative of the parameter, which characterizes the information bound of any regular estimator. The function $\tilde{\psi}$ is called the efficient influence function (EIF), and any estimator is regular and asymptotically efficient, if it is asymptotically linear with influence function given by the EIF (\cite{Vaart} Ch. 25.3).

Since the EIF is uniquely determined by the target parameter and the model, it can be calculated without reference to any estimator. Once it is known, several techniques exist for constructing estimators that are asymptotically linear with the EIF as their influence function (\cite{vanRose}, \cite{cher}, \cite{kennedydouble}, \cite{hinesDemystifying}). We focus on the so-called \textit{one-step estimator}, which is defined as 
$$
\hat{\psi}^{OS} = \psi(\hat{P}) + \mathbb{P}_n\tilde{\psi}(\cdot;\hat{P}),
$$
where $\hat{P}$ is obtained from some (possibly) data-adaptive estimators. In order to show that the one-step estimator is asymptotically linear, one typically relies on a certain decomposition involving an empirical process term and a remainder term, which are both required to be $o_p(n^{-1/2})$. The former can be obtained if $\hat{P}$ is assumed to belong to a Donsker class, but this requirement has been shown to be too restrictive for some data-adaptive estimators, and a certain type of sample splitting, termed \textit{cross-fitting}, has to be applied to the one-step estimator in order to obtain the required convergence rate (\cite{cher}, \cite{kennedydouble}).
\subsection{Average treatment effect}\label{sec:estATE}
To derive an estimator for the ATE we first derive its EIF. We define the nuisance parameter $\nu = (\Lambda_1, \Lambda_2, \Lambda_c, \pi)$. 
The EIF is then given in the following lemma.
\begin{lemma} \label{lem:gatpsi}
The efficient influence function of $\psi_j(P)$ is given by
$$\tilde{\psi}_{\psi_j}(O;\nu) = \varphi_j(\nu)(O) - \psi_j(P),$$
where $\varphi_j(\nu)$ is a real-valued function defined on the sample space of $O$ at a given value of $\nu$ with 
\begin{align}
    \varphi_j(\nu)(O) = \tau_j(X) + \left(\frac{\mathbb{1}(A=1)}{\pi(1\mid X)} - \frac{\mathbb{1}(A=0)}{\pi(0\mid X)}\right)\biggl\{\sum_{i=1,2} \int_0^{t^*} \frac{H_{ij}(s,t^*\mid A, X)}{S_C(s\mid A, X)} \dd M_i(s\mid A, X)\biggr \} 
\end{align}
where
\begin{align}
   H_{ij}(s,t\mid a,x) = \int_s^t \mathbb{1}(i=j) + \frac{F_j(s\mid a, x) - F_j(u\mid a, x)}{S(s\mid a, x)} \dd u. 
\end{align}
\end{lemma}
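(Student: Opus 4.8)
The plan is to obtain $\tilde\psi_{\psi_j}$ as the canonical gradient of $\psi_j$ in the nonparametric model $\mathcal{M}$, computed by pathwise differentiation along one-dimensional submodels. First I would factorize the observed-data likelihood of $O=(\tilde T,\tilde\Delta,A,X)$ into the marginal law of $X$, the treatment mechanism $\pi(a\mid x)$, the cause-specific hazards $\lambda_1,\lambda_2(\cdot\mid a,x)$, and the censoring hazard $\lambda_c(\cdot\mid a,x)$; under Assumption~\ref{ass:indepCens} this induces an orthogonal decomposition of the tangent space into the associated score subspaces. Since $\psi_j(P)=E\{\tau_j(X)\}$ depends on $P$ only through the marginal of $X$ and, via $S$ and $F_j$, through $\lambda_1,\lambda_2$, the derivative along any $\pi$- or $\lambda_c$-submodel vanishes; hence the gradient lives entirely in the $X$-marginal and hazard subspaces, and because the model is nonparametric the resulting gradient is unique and equals the EIF.

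The $X$-marginal contribution is immediate: perturbing the law of $X$ with the conditional quantities held fixed gives a pathwise derivative of the form $E\{[\tau_j(X)-\psi_j(P)]\,s_X(X)\}$, so this component of the gradient is $\tau_j(X)-\psi_j(P)$, which supplies the $\tau_j(X)$ term of $\varphi_j$ and the centering $-\psi_j(P)$. The substantive work is the hazard component. Writing
\[
L_j(0,t^*\mid a,x)=\int_0^{t^*}\!\!\int_0^u S(s\mid a,x)\,\dd\Lambda_j(s\mid a,x)\,\dd u,
\]
I would perturb $\lambda_i\mapsto\lambda_i(1+\epsilon b_i)$ and differentiate at $\epsilon=0$, using $\delta S=-S(\delta\Lambda_1+\delta\Lambda_2)$ together with $\dd F_j(s\mid a,x)=S(s\mid a,x)\,\dd\Lambda_j(s\mid a,x)$.

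Two applications of Fubini then recast the derivative of $L_j(0,t^*\mid a,x)$ into the single form $\sum_{i=1,2}\int_0^{t^*}c_{ij}(s\mid a,x)\,b_i(s)\,\dd\Lambda_i(s\mid a,x)$, and collecting terms — using $\int_r^u S\,\dd\Lambda_j=F_j(u\mid a,x)-F_j(r\mid a,x)$ for the inner swap — yields exactly $c_{ij}(s\mid a,x)=S(s\mid a,x)\,H_{ij}(s,t^*\mid a,x)$. To convert this derivative into an $L_2(P)$ inner product with a score I would use that the hazard tangent subspace is spanned by martingale integrals $\sum_i\int b_i(s,A,X)\,\dd M_i(s\mid A,X)$, whose inner product with $\sum_i\int g_i\,\dd M_i$ equals $\sum_i E\{\int g_i b_i\,\mathbb{1}(\tilde T\geq s)\,\dd\Lambda_i\}$ (cross terms vanish as the two counting processes do not jump together). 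Under Assumption~\ref{ass:indepCens}, $E\{\mathbb{1}(\tilde T\geq s)\mid A,X\}=S(s\mid A,X)S_C(s\mid A,X)$, so matching against the derivative — which enters with a $+$ for $a=1$ and a $-$ for $a=0$, averaged over $\pi$ — forces the representer $g_i(s,a,x)=\pm c_{ij}(s\mid a,x)/\{\pi(a\mid x)S(s\mid a,x)S_C(s\mid a,x)\}$. Because $c_{ij}=S\,H_{ij}$ the factors of $S$ cancel, and the two treatment branches assemble into the weight $\mathbb{1}(A=1)/\pi(1\mid X)-\mathbb{1}(A=0)/\pi(0\mid X)$ multiplying $\sum_i\int_0^{t^*}H_{ij}(s,t^*\mid A,X)/S_C(s\mid A,X)\,\dd M_i(s\mid A,X)$ — precisely the second term of $\varphi_j$; Assumption~\ref{ass:pos} ensures these weights are finite.

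Summing the marginal and hazard representers gives $\tilde\psi_{\psi_j}=\varphi_j(\nu)-\psi_j(P)$, and by the uniqueness of the gradient in the nonparametric model this is the EIF once I confirm it is mean zero. I expect the main obstacle to be the hazard-differentiation step: tracking the nested perturbations of $S$ and $F_j$, performing the two Fubini swaps without sign errors, and compressing the outcome into the compact factor $S\cdot H_{ij}$. As a safeguard I would cross-check the direct computation against the coarsening-at-random route — first obtaining the full-data (uncensored) gradient and then applying the inverse-probability-of-censoring projection, which reproduces the $1/S_C$ weighting automatically.
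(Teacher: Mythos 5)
Your proposal is correct, and it reaches the stated form of $\varphi_j$ by a genuinely different route than the paper. The paper works entirely with Gateaux derivatives under point-mass contamination: for the submodel $P_\epsilon=\epsilon\delta_O+(1-\epsilon)P$ it computes, in sequence, the derivatives of $\Lambda_i(\dd s\mid a,x)$, of $S(s\mid a,x)$, of $F_j(t\mid a,x)$ and finally of $L_j(0,t^*\mid a,x)$ (Lemmas B.1--B.3 in Appendix B), each derivative coming out directly as an inverse-probability-weighted martingale integral with the indicator $\mathbb{1}(A=a,X=x)/\{\pi(a\mid x)f(x)\}$ in front; the EIF is then read off from $\partial_\epsilon\psi_j(P_\epsilon)$ without ever invoking the tangent-space structure. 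You instead use the classical score-calculus: factorize the likelihood under Assumption \ref{ass:indepCens}, observe that $\psi_j$ is insensitive to the $\pi$- and $\lambda_c$-components, compute the pathwise derivative along smooth hazard perturbations $\lambda_i\mapsto\lambda_i(1+\epsilon b_i)$, and solve for the Riesz representer in the martingale tangent subspace using $\E\{\mathbb{1}(\tilde T\geq s)\mid A,X\}=S(s\mid A,X)S_C(s\mid A,X)$ and the orthogonality of $M_1$ and $M_2$. The core computation is the same in both proofs --- your Fubini identity $\delta L_j=\sum_i\int_0^{t^*}S(s\mid a,x)H_{ij}(s,t^*\mid a,x)\,b_i(s)\,\dd\Lambda_i(s\mid a,x)$ is the smooth-submodel counterpart of the paper's equation for $\partial_\epsilon L_{j,\epsilon}$, and I verified your kernel $c_{ij}=S\cdot H_{ij}$ and the cancellation of $S$ against the at-risk probability are exactly right. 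What each approach buys: the paper's point-mass calculus is purely mechanical and never requires characterizing or decomposing the tangent space, but it tacitly relies on the model being nonparametric for the resulting Gateaux derivative to be \emph{the} EIF; your route makes the efficiency and uniqueness claims transparent (the candidate gradient visibly lies in the tangent space, which is all of mean-zero $L_2(P)$ here), at the cost of the extra structural steps (likelihood factorization, score orthogonality, representer matching) that the paper avoids. Your proposed cross-check via the full-data gradient plus IPCW projection is a sensible additional safeguard that neither proof strictly needs.
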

\begin{proof}
    See Appendix \ref{app:eif}.
\end{proof}

The function $\varphi_j(\nu)$ is the uncentered EIF of the ATE and will also appear in the development of an estimator for the best partially linear projection in Section \ref{est:BPLIP}. As the EIF of the ATE is linear in the target parameter, the one-step estimator for $\psi_j$ reduces to
$$
\hat{\psi}_j^{OS} = \mathbb{P}_n \varphi_j(\hat{\nu}).
$$
As noted earlier, the one-step estimator may fail to be asymptotically linear when using data-adaptive estimators for $\hat{\nu}$ and we further have to use a cross-fitted version of the estimator in order to obtain the desired asymptotic properties. 

To define the sample splitting involved in constructing the cross-fitted estimator, let $\boldsymbol{i} = (i_1, i_2, \ldots, i_n)$ be an index vector drawn from an $n$-dimensional multinomial distribution with $K$ events with probability $p_k=\frac{1}{K}$, $k=1,\ldots, K$, for the $k$'th event. Define the index sets $\mathcal{T}_k = \{j : i_j = k\}$ for $k=1\ldots K$ such that $\{1, \ldots, n\} = \dot{\cup}_{k=1}^K \mathcal{T}_k$, where $\dot{\cup}$ denotes the disjoint union. Corresponding to the index sets, we define $K$ disjoint data splits by $\mathcal{V}_k = \{O_j: i_j = k\}$ such that $\mathcal{O} = \dot{\cup}_{k=1}^K \mathcal{V}_k$, and we define $K$ leave-out data splits by $\mathcal{V}_{-k} = \dot{\cup}_{j \neq k} \mathcal{V}_j$.

To construct a cross-fitted one-step estimator of $\psi_j$ based on the EIF, let $\hat{\nu} = (\hat{\Lambda}_1, \hat{\Lambda}_2, \hat{\Lambda}_c, \hat{\pi}) $ denote the estimated nuisance parameter, and let $\hat{\nu}_{-k}$ be the estimated nuisance parameter based on data in the $k$'th leave out sample, $\mathcal{V}_{-k}$, and let $\mathbb{P}_n^k$ be the empirical measure of $O \in \mathcal{V}_k$. We define the K-fold cross-fitted estimator of $\psi_j$ as
\begin{align}
    \hat{\psi}_j^{CF} = \sum_{k=1}^K \frac{n_k}{n} \mathbb{P}_n^k \varphi(\hat{\nu}_{-k}) = \frac{1}{n} \sum_{k=1}^K \sum_{i \in \mathcal{T}_k} \varphi_j(\hat{\nu}_{-k})(O_i),
\end{align}
where $n_k$ is the number of observations in the $k$'th data split. The construction of the cross-fitted estimator detailed here is quite general, see \textcite{kennedydouble} for a nice discussion and comparison to the one-step estimator without cross-fitting.

In order to derive asymptotic results for $\hat{\psi}_j^{CF}$ we need a set of assumptions on the nuisance estimators. The assumptions below are stated for a general nuisance estimator $\hat{\nu}$, but in applications to cross-fitted estimators, they are assumed to hold for each leave-out sample $\mathcal{V}_{-k}$. We will be explicit about this when stating results on the obtained estimators, but it is left out of assumption \ref{ass:nuis} for notational convenience.

\begin{NoHyper}
\begin{assumption}\label{ass:nuis}
     The nuisance estimator $\hat{\nu}$ satisfy the following conditions
    \begin{enumerate}[label=\ref{ass:nuis}\arabic*]
        \item \label{ass:nuisPos} There exist a real-valued parameter $\eta >0$ such that $\hat{S}(t\mid a,x) > \eta$, $S(t\mid a,x) > \eta$, $\hat{S}_C(t\mid a,x) > \eta$, $S_C(t\mid a,x) > \eta$, $\hat{\pi}(a \mid x) > \eta$, $\pi(a \mid x) > \eta$ for all $(t,a,x) \in [0,t^*] \times \{0,1\} \times \mathcal{X}$.
        \item \label{ass:nuisCons} For $a=0,1$, it holds that
        \begin{align*}
            \E_0\left[\sup_{s\leq t^*} \left| \hat{\Lambda}_1(s\mid a, X) - \Lambda_{1,0}(s\mid a, X)\right|\right]^2 &= o_p(1) \\
            \E_0\left[\sup_{s\leq t^*} \left| \hat{\Lambda}_2(s\mid a, X) - \Lambda_{2,0}(s\mid a, X)\right|\right]^2 &= o_p(1) \\
            \E_0\left[\sup_{s\leq t^*} \left| \hat{\Lambda}_c(s\mid a, X) - \Lambda_{c,0}(s\mid a, X)\right|\right]^2 &= o_p(1) \\
            \E_0\left[\hat{\pi}(a\mid X) - \pi_{0}(a\mid X)\right]^2 &= o_p(1)
        \end{align*} 
        \item \label{ass:nuisDouble} For $a=0,1$, it holds that
        \begin{align*}
        &\E_0\left\{\sum_{i=1,2} \int_0^{t^*} S(s\mid a, X) \hat{H}_{ij}(s, t^*\mid a, X) \right. \\
        & \left. \phantom{\E\sum_{i=1,2} }\times \left(1 - \frac{\pi(a\mid X)S_C(s\mid a, X)}{\hat{\pi}(a\mid X)\hat{S}_C(s\mid a, X)} \right)\dd \left[ 
        \hat{\Lambda}_i(s\mid a, X) - \Lambda_i(s\mid a, X) \right] \right\} = o_p(n^{-1/2}).
        \end{align*}
    \end{enumerate}
\end{assumption}
\end{NoHyper}

Assumption \ref{ass:nuisPos} is the usual positivity assumption found through out the causal inference literature (for examples with censored data, see e.g. \cite{westling}, \cite{heleneFrank}, \cite{heleneMark}). Whereas assumption \ref{ass:pos} relates to the true data generating mechanism, assumption \ref{ass:nuisPos} extends to the estimators as well. We note that employing cross-fitting in relatively small sample sizes can sometimes lead to practical positivity-violations, when a "rare" covariate lies in $\mathcal{V}_k$ but not in $\mathcal{V}_{-k}$. For \ref{ass:nuisCons} to hold, all nuisance estimators must be consistent, and for the hazard estimators this amounts to uniform consistency. This requirement suggest the use of flexible learners for nuisance estimation. Assumption \ref{ass:nuisDouble} reflects the dobule robustness that is common for one-step estimators. In Lemma \ref{lem:remate_a} in the Appendix it is shown that the so-called remainder term, coming from the aforementioned decomposition of the estimator, takes this form. It is sometimes referred to as a second order remainder term, when it can be shown to hold if each of the nuisance estimators converge on $n^{-1/4}$-rate in $L_2(P)$-norm. If one considers cumulative hazard estimators that are absolute continuous, the result can be obtained from a simple application of the Cauchy-Schwarz inequality (for examples involving Highly Adaptive Lasso, see \cite{munch}, \cite{heleneFrank}, \cite{heleneThomas}), but this exclude many commonly used cumulative hazard estimators such as any Breslow-type estimators. We expect nonetheless that the double robustness is obtained for most reasonable estimators, and in the later simulation studies, this will be exemplified by the use of random survival forests (\cite{ishwaran}).

Next follows our main result for the cross-fitted ATE estimator:

\begin{theorem}\label{thm:anATE} Assume that the nuisance estimators $\hat{\nu}_{-k}$, $k=1\ldots K$ follow assumption \ref{ass:nuis} for each $k$. Then the cross-fitted estimator is asymptotically linear with influence function given by $\tilde{\psi}_{\psi_j}$ and hence
$$
\sqrt{n}\left(\hat{\psi}_j^{CF} - \psi_j \right) \overset{d}{\rightarrow} \mathcal{N}(0, P\tilde{\psi}_{\psi_j}(\cdot, \nu_0)^2).
$$
\end{theorem}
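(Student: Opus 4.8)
The plan is to establish the asymptotically linear expansion
$$
\hat{\psi}_j^{CF} - \psi_j = \mathbb{P}_n\tilde{\psi}_{\psi_j}(\cdot;\nu_0) + o_p(n^{-1/2}),
$$
after which the i.i.d.\ central limit theorem together with Slutsky's lemma delivers the stated normal limit with variance $P\tilde{\psi}_{\psi_j}(\cdot,\nu_0)^2$. The first ingredient is the observation that, since $\varphi_j$ is the \emph{uncentered} EIF, $P_0\varphi_j(\nu_0) = \psi_j$, so $\tilde{\psi}_{\psi_j}(\cdot;\nu_0) = \varphi_j(\nu_0) - \psi_j$ is $P_0$-centered; this is precisely the identity the remainder analysis will exploit.

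I would start from a fold-wise von Mises expansion. For each fold $k$, adding and subtracting $\varphi_j(\nu_0)$ and $P_0\varphi_j(\hat{\nu}_{-k})$ gives
\begin{align*}
\mathbb{P}_n^k\varphi_j(\hat{\nu}_{-k}) - \psi_j
&= (\mathbb{P}_n^k - P_0)\varphi_j(\nu_0) \\
&\quad + (\mathbb{P}_n^k - P_0)\bigl(\varphi_j(\hat{\nu}_{-k}) - \varphi_j(\nu_0)\bigr) \\
&\quad + \bigl(P_0\varphi_j(\hat{\nu}_{-k}) - \psi_j\bigr),
\end{align*}
which I label the leading term, the empirical-process term, and the remainder term. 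Multiplying by $n_k/n$ and summing over $k$, the leading terms telescope: since $\sum_k (n_k/n)\mathbb{P}_n^k = \mathbb{P}_n$ and $\sum_k n_k/n = 1$, their total contribution is exactly $(\mathbb{P}_n - P_0)\varphi_j(\nu_0) = \mathbb{P}_n\tilde{\psi}_{\psi_j}(\cdot;\nu_0)$, the desired influence-function term.

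The empirical-process term is where cross-fitting does its work. Conditionally on the leave-out sample $\mathcal{V}_{-k}$, the estimate $\hat{\nu}_{-k}$ is fixed and independent of the fold data $\mathcal{V}_k$, so $(\mathbb{P}_n^k - P_0)(\varphi_j(\hat{\nu}_{-k}) - \varphi_j(\nu_0))$ is a centered average of i.i.d.\ terms with conditional mean zero and conditional variance bounded by $n_k^{-1}\norm{\varphi_j(\hat{\nu}_{-k}) - \varphi_j(\nu_0)}^2$. A conditional Chebyshev argument then reduces matters to showing $\norm{\varphi_j(\hat{\nu}_{-k}) - \varphi_j(\nu_0)} = o_p(1)$, which follows from the consistency rates in Assumption \ref{ass:nuisCons} together with the uniform boundedness supplied by the positivity Assumption \ref{ass:nuisPos}: on the positivity set the integrands $H_{ij}$, the factor $1/S_C$, and the martingale increments are bounded and Lipschitz in the nuisances, so consistency of $\hat{\nu}_{-k}$ transfers to consistency of $\varphi_j(\hat{\nu}_{-k})$. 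As $K$ is fixed, the weighted sum of these terms is again $o_p(n^{-1/2})$.

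For the remainder term I would appeal directly to Lemma \ref{lem:remate_a}, which identifies $P_0\varphi_j(\hat{\nu}_{-k}) - \psi_j$ with the second-order integral expression appearing in Assumption \ref{ass:nuisDouble}; that assumption makes each fold's remainder $o_p(n^{-1/2})$, and the fixed-$K$ weighted sum preserves this rate. Collecting the three contributions yields the claimed expansion, and the CLT then gives $\sqrt{n}(\hat{\psi}_j^{CF} - \psi_j)\overset{d}{\rightarrow}\mathcal{N}(0, P\tilde{\psi}_{\psi_j}(\cdot,\nu_0)^2)$. The main obstacle is not the empirical-process bookkeeping, which is routine once cross-fitting is invoked, but the remainder identity itself: showing that the first-order bias $P_0\varphi_j(\hat{\nu}_{-k}) - \psi_j$ reduces to the specific double-robust form of \ref{ass:nuisDouble} requires the competing-risks martingale manipulations carried out in Lemma \ref{lem:remate_a}, and it is there that the particular structure of $H_{ij}$ and the cause-specific hazards must be handled with care.
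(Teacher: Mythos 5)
Your proposal follows essentially the same route as the paper's proof: the identical three-term decomposition (leading influence-function term, empirical-process term, remainder term), cross-fitting plus a conditional Chebyshev argument for the empirical-process term (this is exactly the content of Proposition 2 in Kennedy et al., which the paper simply cites), Lemma \ref{lem:remate_a} combined with Assumption \ref{ass:nuisDouble} for the remainder, and the CLT with Slutsky's lemma to conclude. The one place where you understate the work is the reduction to $\norm{\varphi_j(\hat{\nu}_{-k}) - \varphi_j(\nu_0)} = o_p(1)$. You assert this follows because the map from nuisances to $\varphi_j$ is ``bounded and Lipschitz,'' but that is not quite routine: $\varphi_j(\hat{\nu}_{-k})$ integrates against the estimated cumulative hazard measures $\dd\hat{\Lambda}_i$, which are not assumed absolutely continuous, so comparing $\int \hat{H}_{ij}/\hat{g}\,\dd\hat{\Lambda}_i$ with $\int H_{ij}/g\,\dd\Lambda_i$ in terms of the sup-norm distances of Assumption \ref{ass:nuisCons} requires the identity $H_j(\dd s,t^*\mid a,x) = \sum_{i=1,2} H_{ij}(s,t^*\mid a,x)\,\Lambda_i(\dd s\mid a,x)$ (derived via the backward equation) followed by integration by parts, together with Duhamel-type bounds for $\hat{F}_j - F_j$ and $\hat{H}_{ij} - H_{ij}$; this is the entire content of the paper's Lemma \ref{lem:empate}. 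Since the conclusion you assert is correct under Assumptions \ref{ass:nuisPos} and \ref{ass:nuisCons}, this is a compression rather than an error, but a complete proof would need to carry out that lemma's argument rather than invoke Lipschitz continuity.
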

\begin{proof}
    See Appendix \ref{app:asympLin}.
\end{proof}
In practise, the variance of the influence function is estimated by the cross-fitted estimator:

$$
\hat{\sigma}_{\psi_j}^{2,CF} = \sum_{k=1}^K \frac{n_k}{n} \mathbb{P}_n^k\left(\varphi(\hat{\nu}_{-k}) - \hat{\psi}_j^{CF}\right)^2,
$$
and the standard error of $\hat{\psi}_j^{CF}$ is given by $\sqrt{\hat{\sigma}_{\psi_j}^{2,CF}/n}$. We note that $\hat{\sigma}_{\psi_j}^{2,CF}$ is a type of plug-in estimator, and contrary to the one-step estimator, it is not debiased via its influence function. Hence, even though the estimator is consistent (by Lemma 1 in \cite{zm}), for consistent nuisance estimators, it is not generally asymptotically linear for data-adaptive nuisance estimators, which may result in biased standard error estimates in finite samples. In the simulation studies in Section \ref{sec:sim}, this will be explored by contrasting the use of (semi)parametric and data-adaptive nuisance estimators, with the latter obtained from random forests.    

\subsection{Best partially linear projection}\label{est:BPLIP}
Estimation of $\Omega_j^l(P)$ follows the same overall strategy, i.e., construct an asymptotically linear estimator using semiparametric theory. The difference now is that the target parameter is a ratio of two parameters which can both be written as a map from $\mathcal{M}$ to the reals. If we construct asymptotically linear estimators for both parameters in the ratio, separately, then the ratio of the estimators will also be asymptotically linear. Furthermore, if each estimator in the ratio has their respective EIF as their influence function, the ratio of the estimators will have its EIF as its influence, by the functional delta method (\cite{Vaart} Ch. 25.7). In the following we will extend the approach of \textcite{zm} to the CATE function defined by the number of life years lost, $\tau_j(x)$, for a given time-horizon $[0,t^*]$, where each of the parameters in the ratio of $\Omega_j^l$ is estimated separately. We start by calculating the EIF for the relavant parameters.

\begin{lemma}\label{lem:gatvim}
Define the mappings $\Gamma_j^l:\mathcal{M}\rightarrow \mathbb{R}$ and $\chi_j^l:\mathcal{M}\rightarrow \mathbb{R}$ as 
$$
\Gamma_j^l(P) = E\{\cov(X_l, \tau_j(X) \mid X_{-l})\}
$$
and 
$$
\chi^l(P) = E\{\var(X_l\mid X_{-l})\}
$$
such that $\Omega_j^l(P) = \frac{\Gamma_j^l(P)}{\chi^l(P)}$. The efficient influence functions of $\Gamma_j^l(P)$, $\chi^l(P)$ and $\Omega_j^l(P)$, respectively, are given by
\begin{align}
    \tilde{\psi}_{\Gamma_j^l}(O;P) &= [\varphi_j(\nu)(O) - \E(\tau_j(X)\mid X_{-l})][X_l - \E(X_l\mid X_{-l})] - \Gamma_j^l(P), \\
\tilde{\psi}_{\chi^l}(O;P) &= [X_l - \E(X_l\mid X_{-l})]^2 - \chi^l(P), \\
\tilde{\psi}_{\Omega_j^l}(O;P) &= \frac{1}{\chi^l(P)}\left(\tilde{\psi}_{\Gamma_j^l}(O;P) - \Omega_j^l(P) \tilde{\psi}_{\chi^l}(O;P) \right).
\end{align}
\end{lemma}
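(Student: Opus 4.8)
The plan is to derive each efficient influence function via the pathwise-derivative characterization: along a regular one-dimensional submodel $\{P_t\}$ with score $s(O)$ at $t=0$, I compute $\frac{d}{dt}\Gamma_j^l(P_t)\big|_{t=0}$, write it as $E_0[\tilde{\psi}_{\Gamma_j^l}(O)\,s(O)]$, and read off the representer; since $\mathcal{M}$ is nonparametric the tangent space is all of $L_2^0(P_0)$, so the representer is unique and is automatically the EIF. The first simplification is to note
\begin{equation*}
\Gamma_j^l(P) = E\{(X_l - m_l(X_{-l}))\,\tau_j(X)\}, \qquad m_l(X_{-l}) := E(X_l\mid X_{-l}),
\end{equation*}
since $E[(X_l - m_l)\,E(\tau_j(X)\mid X_{-l})] = 0$. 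This isolates the two channels through which $P$ enters: the covariate law (via the outer expectation and $m_l$) and the nuisance $\nu$ (via the CATE $\tau_j$).

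I would then differentiate this product by the Leibniz rule into three pieces. The outer-expectation piece contributes the representer $(X_l - m_l)\tau_j(X) - \Gamma_j^l$. The piece from differentiating $m_l$ uses $\dot{m}_l(X_{-l}) = E_0[(X_l - m_l)s(O)\mid X_{-l}]$; taking $-E_0[\dot{m}_l\,\tau_j(X)] = -E_0[\dot{m}_l\,g_j(X_{-l})]$ with $g_j := E(\tau_j(X)\mid X_{-l})$ yields the representer $-g_j(X_{-l})(X_l - m_l)$. The third and essential piece comes from differentiating $\tau_j$. Here I invoke Lemma~\ref{lem:gatpsi}: the correction term in $\varphi_j(\nu)$ has conditional mean zero given $X$ (the martingale integrals vanish given $A,X$ and the propensity weights then average out), so $E_0[\varphi_j(\nu)(O)\mid X] = \tau_j(X)$ and $\varphi_j$ is the AIPW-type uncentered EIF of the ATE. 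The key identity to establish — the competing-risks analogue of the weighted-ATE result underlying \textcite{zm} — is that for a fixed weight $w(X)$ the functional $E[w(X)\tau_j(X)]$ has uncentered EIF $w(X)\varphi_j(\nu)(O)$, equivalently $E_0[w\,\dot{\tau}_j] = E_0[w(\varphi_j(\nu) - \tau_j)s]$. Applying this with $w = X_l - m_l$ held fixed gives the representer $(X_l - m_l)(\varphi_j(\nu)(O) - \tau_j(X))$. Summing the three pieces collapses the $\pm(X_l-m_l)\tau_j(X)$ terms and leaves exactly $[\varphi_j(\nu)(O) - g_j(X_{-l})][X_l - m_l(X_{-l})] - \Gamma_j^l(P)$, as claimed.

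The EIF of $\chi^l$ is the easy case: writing $\chi^l = E[(X_l - m_l)^2]$, differentiation gives the outer-expectation representer $(X_l - m_l)^2 - \chi^l$ plus a term $-2E_0[(X_l - m_l)\dot{m}_l]$ that vanishes because $E_0(X_l - m_l\mid X_{-l}) = 0$; hence $\tilde{\psi}_{\chi^l}(O) = (X_l - m_l)^2 - \chi^l$. Finally, since $\Omega_j^l = \Gamma_j^l/\chi^l$ is a smooth ratio of two pathwise-differentiable parameters with $\chi^l > 0$, the functional delta method (\textcite{Vaart} Ch.~25.7) gives the quotient rule $\tilde{\psi}_{\Omega_j^l} = \frac{1}{\chi^l}\big(\tilde{\psi}_{\Gamma_j^l} - \Omega_j^l\,\tilde{\psi}_{\chi^l}\big)$, which is the stated expression.

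The main obstacle is the third piece above: justifying that differentiating the CATE $\tau_j$ contributes precisely $(X_l - m_l)(\varphi_j(\nu) - \tau_j)$, i.e.\ that $\varphi_j$ from Lemma~\ref{lem:gatpsi} plays the role of the uncentered EIF inside a covariate-weighted average. This reduces to verifying the weighted-ATE identity $E_0[w\,\dot{\tau}_j] = E_0[w(\varphi_j(\nu) - \tau_j)s]$ for fixed $w$, which is where the full hazard/propensity structure of $\varphi_j$ is used; the product-rule bookkeeping already guarantees that the $P$-dependence of the weight $X_l - m_l$ is accounted for separately in the $m_l$-differentiation term, so it may legitimately be held fixed here. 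Everything else is routine score differentiation and an application of the delta method.
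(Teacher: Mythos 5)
Your proof has the right architecture, and it is essentially an unpacked, self-contained version of the paper's own argument: the paper obtains $\tilde{\psi}_{\chi^l}$ from Theorem 3 of \textcite{zm}, obtains $\tilde{\psi}_{\Gamma_j^l}$ from Remark 2 of \textcite{zm} (whose content is precisely your three-piece product-rule decomposition), and gets $\tilde{\psi}_{\Omega_j^l}$ by the chain rule. Your computations of the outer-expectation piece, the $m_l$-differentiation piece, the $\chi^l$ derivative, and the delta-method step for the ratio are all correct, as is the algebra collapsing the three representers.

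The gap is exactly the step you flag as the main obstacle, and the justification you offer for it does not suffice. You support the weighted identity $E_0[w\,\dot{\tau}_j] = E_0[w(\varphi_j(\nu) - \tau_j)s]$ by appealing to Lemma \ref{lem:gatpsi} and to the fact that the martingale correction in $\varphi_j(\nu)$ has conditional mean zero, so that $\E_0\{\varphi_j(\nu)(O)\mid X\} = \tau_j(X)$. Conditional unbiasedness is strictly weaker than what you need: for example, $\varphi_j(\nu)(O) + \{A - \pi(1\mid X)\}$ is also conditionally unbiased for $\tau_j(X)$, yet it violates the identity (take a score $s$ perturbing only the treatment-assignment mechanism; then $\dot{\tau}_j \equiv 0$ because $\tau_j$ does not depend on $\pi$, while $E_0[w\{A - \pi(1\mid X)\}s] \neq 0$ in general). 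Likewise, Lemma \ref{lem:gatpsi} is the special case $w \equiv 1$ and does not by itself deliver the identity for all fixed weights. What is required is the pointwise statement that $\tau_j(x)$ has Gateaux derivative $\frac{\mathbb{1}(X=x)}{f(x)}\left[\varphi_j(\nu)(O) - \tau_j(X)\right]$, equivalently $\dot{\tau}_j(x) = E_0[(\varphi_j(\nu)(O) - \tau_j(X))\,s(O)\mid X = x]$; this is exactly the condition the paper's proof verifies, and it comes from Lemma \ref{lem:gatLj} in the appendix, which computes the Gateaux derivative of $L_j(0,t^*\mid a,x)$ through the derivatives of the cause-specific cumulative hazards and then differences over $a=1,0$. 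Once you cite (or reprove) that lemma, your argument closes; as written, the central step of your derivation is asserted rather than proven.
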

\begin{proof}
    See Appendix \ref{app:eif}
\end{proof}
The EIF's above depend explicitly on the conditional distribution of $X_l$ given $X_{-l}$ through $E(X_l\mid X_{-l})$ and $E(\tau_j(X)\mid X_{-l})$, so to express them as mappings of the nuisance parameter, we extend the notion of $\nu$. Let $\tau_j^l(x_{-l}) = E(\tau_j(X)\mid X_{-l} = x_{-l})$ and $E^l(x_{-l}) = E(X_l\mid X_{-l} = x_{-l})$, and define $\nu_l^1 = E^l$ and $\nu_l^2 = (\Lambda_1, \Lambda_2, \Lambda_c, \pi, \tau_j^l, E^l)$. We define the uncentered EIF's corresponding to the EIF's $\tilde{\psi}_{\Gamma_j^l}$ and $\tilde{\psi}_{\chi_j^l}$ as
\begin{align}
    \phi_{\Gamma_j^l}(O; \nu_l^2) &= [\varphi_j(\nu)(O) - \tau_j^l(X_{-l})][X_l - E^l(X_{-l})] \\
    \phi_{\chi^l}(O; \nu_l^1) &= [X_l - E^l(X_{-l})]^2.
\end{align}
The construction of the estimators for $\Gamma_j^l$ and $\chi^l$ now follow similar to the procedure in the ATE setting. The estimation of $\chi^l$ is given in \textcite{zm}, but is included here as well for completeness. Let $\hat{\nu}^1_l$ and $\hat{\nu}^2_l$ denote the estimated nuisance parameters. As in the ATE-setting, we define $\hat{\nu}^1_{l,-k}$ and $\hat{\nu}^2_{l,-k}$ as the nuisance estimators based on data in $\mathcal{V}_{-k}$. The cross-fitted estimators are defined as 

\begin{align}
    \hat{\Gamma}_j^{l, CF} = \frac{1}{n} \sum_{k=1}^K \sum_{i \in \mathcal{T}_k} \phi_{\Gamma_j^l}(O_i; \hat{\nu}_{l,-k}^2), \quad
    \hat{\chi}^{l, CF} = \frac{1}{n} \sum_{k=1}^K \sum_{i \in \mathcal{T}_k} \phi_{\chi^l}(O_i; \hat{\nu}_{l,-k}^1), \quad
    \hat{\Omega}_j^{l, CF} = \frac{\hat{\Gamma}_j^{l, CF}}{\hat{\chi}^{l, CF}}. \nonumber
\end{align}

Since the above estimators depend on the extended nuisance estimators, we have to make additional assumption in order to derive the desired asymptotic linearity. Accordingly, we have the following result: 
\begin{theorem}\label{thm:anbplp}
    Assume that for each fold $k=1,\ldots, K$ it holds that 
    \begin{itemize}
        \item[(i)] $\left( X_l - \hat{E}^l \right)^2 \leq M, \ a.s $ for all $n$ and some $M>0.$
        \item[(ii)] $\norm{\hat{\tau}_j^l - \tau_j^l} = o_p(n^{-1/4}).$
        \item[(iii)] $\norm{\hat{E}^l - E^l} = o_p(n^{-1/4})$.
    \end{itemize}
    Then, if assumption \ref{ass:nuis} holds for each $k$, it follows that $\hat{\Omega}_j^{l,CF}$ is asymptotically linear with influence function given by $\tilde{\psi}_{\Omega_j^l}$ and hence
    $$
    \sqrt{n}(\hat{\Omega}_j^{l,CF} - \Omega_j^{l}) \overset{d}{\rightarrow} \mathcal{N}(0, P\tilde{\psi}_{\Omega_j^l}^2).
    $$
\end{theorem}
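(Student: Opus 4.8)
The plan is to exploit the ratio structure $\Omega_j^l=\Gamma_j^l/\chi^l$ emphasised before Lemma \ref{lem:gatvim}. First I would establish that the two cross-fitted estimators $\hat\Gamma_j^{l,CF}$ and $\hat\chi^{l,CF}$ are each asymptotically linear with influence functions $\tilde\psi_{\Gamma_j^l}$ and $\tilde\psi_{\chi^l}$, and then pass to the ratio by the functional delta method. Because $(u,v)\mapsto u/v$ is differentiable at $(\Gamma_j^l,\chi^l)$ whenever $\chi^l>0$ (non-degeneracy of $X_l$ given $X_{-l}$, which I take as implicit), with gradient $(1/\chi^l,-\Gamma_j^l/(\chi^l)^2)$, joint asymptotic linearity of the pair and the delta method (\cite{Vaart} Ch. 25.7) deliver asymptotic linearity of $\hat\Omega_j^{l,CF}$ with influence function $\tfrac{1}{\chi^l}(\tilde\psi_{\Gamma_j^l}-\Omega_j^l\,\tilde\psi_{\chi^l})=\tilde\psi_{\Omega_j^l}$, and the stated central limit theorem then follows exactly as in Theorem \ref{thm:anATE}. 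Everything therefore reduces to the two component estimators.

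For each component I would use the standard one-step/cross-fit decomposition, written fold by fold with $(\phi,\theta_0)$ standing for either $(\phi_{\Gamma_j^l},\Gamma_j^l)$ or $(\phi_{\chi^l},\chi^l)$,
\begin{align*}
\mathbb{P}_n^k\phi(\cdot;\hat\nu_{-k})-\theta_0 &=(\mathbb{P}_n^k-P)\tilde\psi(\cdot;\nu_0)+(\mathbb{P}_n^k-P)\big[\phi(\cdot;\hat\nu_{-k})-\phi(\cdot;\nu_0)\big] \\
&\quad +\big\{P\phi(\cdot;\hat\nu_{-k})-\theta_0\big\}.
\end{align*}
Summing the weighted folds, the first term aggregates to $(\mathbb{P}_n-P)\tilde\psi(\cdot;\nu_0)$, which supplies the claimed influence function. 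The second term is the cross-fitting increment: conditionally on the training fold $\mathcal{V}_{-k}$ the summand is a fixed centered function independent of the evaluation fold, so the term has conditional variance $n_k^{-1}\norm{\phi(\cdot;\hat\nu_{-k})-\phi(\cdot;\nu_0)}^2$, which is $o_p(n^{-1})$ by consistency of the nuisances (Assumption \ref{ass:nuisCons} together with (ii)--(iii)); the term is therefore $o_p(n^{-1/2})$. What remains is to control the third, remainder term.

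For $\chi^l$ this is elementary: since $\E(X_l\mid X_{-l})=E^l$ the linear term vanishes on conditioning and $P\phi_{\chi^l}(\cdot;\hat E^l)-\chi^l=\norm{\hat E^l-E^l}^2=o_p(n^{-1/2})$ by (iii). For $\Gamma_j^l$ I would write the remainder as $P\phi_{\Gamma_j^l}(\cdot;\hat\nu_l^2)-\Gamma_j^l$ and expand $\varphi_j(\hat\nu)-\hat\tau_j^l$ about the truth while keeping the estimated weight $X_l-\hat E^l$ intact. Conditioning on $X$ and then on $X_{-l}$, the identities $\E\{\varphi_j(\nu_0)\mid X\}=\tau_j(X)$, $\E\{\tau_j(X)\mid X_{-l}\}=\tau_j^l$ and $\E(X_l\mid X_{-l})=E^l$ cancel every first-order contribution and reduce the remainder exactly to $\langle\hat\tau_j^l-\tau_j^l,\hat E^l-E^l\rangle+\E\{r(\hat\nu;X)[X_l-\hat E^l]\}$, where $r(\hat\nu;X)=\E\{\varphi_j(\hat\nu)\mid X\}-\tau_j(X)$ is the conditional version of the ATE remainder analysed in Lemma \ref{lem:remate_a}. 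The inner product is $o_p(n^{-1/2})$ by Cauchy--Schwarz and (ii)--(iii).

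The main obstacle is the weighted ATE remainder $\E\{r(\hat\nu;X)[X_l-\hat E^l]\}$: Assumption \ref{ass:nuisDouble} only controls the unweighted $\E\{r(\hat\nu;X)\}$, and that bound does not transfer automatically to a weighted integral. The role of condition (i) is precisely to make the weight $X_l-\hat E^l$ bounded almost surely, which is what lets me re-run the double-robustness computation with the extra factor. Writing $r(\hat\nu;X)$ in the product form of Lemma \ref{lem:remate_a}, each summand is an integral of a hazard increment $\dd[\hat\Lambda_i-\Lambda_i]$ against a factor carrying the discrepancy $1-\pi S_C/(\hat\pi\hat S_C)$; after inserting the bounded weight and applying Cauchy--Schwarz summand by summand, the whole expression is bounded by a constant times products $\norm{\hat\Lambda_i-\Lambda_i}\,\norm{1-\pi S_C/(\hat\pi\hat S_C)}$, each $o_p(n^{-1/2})$ under the $n^{-1/4}$ nuisance rates underlying Assumption \ref{ass:nuisDouble}. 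This establishes $\hat\Gamma_j^{l,CF}-\Gamma_j^l=(\mathbb{P}_n-P)\tilde\psi_{\Gamma_j^l}(\cdot;\nu_0)+o_p(n^{-1/2})$; combined with the analogous expansion for $\hat\chi^{l,CF}$ and the delta-method step of the first paragraph, it yields the asymptotic linearity and the stated normal limit.
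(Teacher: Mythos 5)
Your overall architecture coincides with the paper's: prove asymptotic linearity of $\hat{\Gamma}_j^{l,CF}$ and $\hat{\chi}^{l,CF}$ separately and pass to the ratio by the delta method; decompose each cross-fitted estimator into influence-function term, empirical process term, and remainder; kill $\E\{[\hat{E}^l - E^l][\varphi_j(\nu)-\tau_j^l]\}$ exactly by iterated expectation and the conditional martingale property; handle $\langle\hat{\tau}_j^l-\tau_j^l,\hat{E}^l-E^l\rangle$ by Cauchy--Schwarz with (ii)--(iii). Your conditional-variance argument for the empirical process term is precisely the content of Proposition 2 of \textcite{kennedydouble} that the paper invokes, and your reduction of the $\Gamma_j^l$ remainder to $\langle\hat{\tau}_j^l-\tau_j^l,\hat{E}^l-E^l\rangle+\E\{r(\hat{\nu};X)[X_l-\hat{E}^l]\}$ matches the paper's decomposition \eqref{eq:remgam}. (Two sketch-level omissions: the $L_2$-consistency $\norm{\phi_{\Gamma_j^l}(\hat{\nu}_{-k})-\phi_{\Gamma_j^l}(\nu)}=o_p(1)$ needs condition (i) and an almost-sure bound on $\varphi_j(\nu)-\tau_j^l$, which the paper establishes via Lemma \ref{lem:empate} and an explicit martingale-integral bound, not just \ref{ass:nuisCons} with (ii)--(iii); and the paper handles $\hat{\chi}^{l,CF}$ by citing Theorem 5 of \textcite{zm}, which is equivalent to your direct computation.)

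The genuine gap is in your final step, the weighted double-robustness term $\E\{r(\hat{\nu};X)[X_l-\hat{E}^l]\}$. You correctly observe that Assumption \ref{ass:nuisDouble} as literally stated controls only the unweighted expectation — an issue the paper itself glosses over — but your fix steps outside the theorem's hypotheses. Writing each summand of $r(\hat{\nu};X)$ as an $L_2$ inner product and applying Cauchy--Schwarz to obtain products $\norm{\hat{\Lambda}_i-\Lambda_i}\cdot\norm{1-\pi S_C/(\hat{\pi}\hat{S}_C)}$ requires (a) $n^{-1/4}$ rates for $\hat{\Lambda}_i$, $\hat{\pi}$, $\hat{S}_C$, which Theorem \ref{thm:anbplp} does not assume, and (b) absolute continuity of the integrator $\dd[\hat{\Lambda}_i-\Lambda_i]$, which is exactly the restriction Assumption \ref{ass:nuisDouble} was designed to avoid — the paper's discussion notes this route excludes Breslow-type estimators, the very estimators it wants to cover. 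So as written your argument proves the theorem only under strictly stronger nuisance conditions. The paper's own resolution is simpler and stays within the stated hypotheses: by (i) the weight satisfies $|X_l-\hat{E}^l|\leq\sqrt{M}$ almost surely, so the weighted term is bounded by $\sqrt{M}\sum_{a=0,1}$ times (the absolute value of) the expression appearing in \ref{ass:nuisDouble}, which is then applied as a black box. You should replace your Cauchy--Schwarz re-derivation by this pull-out-the-weight step; your own observation about the weighted-versus-unweighted mismatch then amounts to reading \ref{ass:nuisDouble} as controlling the expectation of the absolute value, which is the paper's implicit (and intended) interpretation.
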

\begin{proof}
    See Appendix \ref{app:asympLin}.
\end{proof}
Assumptions $(i)-(iii)$ in Theorem \ref{thm:anbplp} refer to the nuisance estimators related to the conditional distribution of $X_l$ given $X_{-l}$. Regarding assumption \ref{ass:nuisDouble} for ATE estimation, we discussed the double robustness properties of the cross-fitted estimator in relation to the convergence rates of the nuisance estimators, and we can add to that discussion the rates given in $(ii)$ and $(iii)$. We see that the estimator for our target parameter achieves parametric rates (asymptotic linearity) if the nuisance estimators related to $X_l|X_{-l}$ are estimated at $n^{-1/4}$-rate, adding to the notion of "double robustness". The rate in assumption $(iii)$ is known for many estimators, as it is an assumption on a typical regression estimator. Whether it is fulfilled depends on the type of the estimator used and possibly on the dimension $d$ in relation to $n$, but we note that the assumption is to be considered rather mild, allowing for many types of data-adaptive estimators (see e.g. the discussion in \cite{kennedydouble}, Section 4.3). For estimation of $\hat{\tau}_j^l$, we regress the CATE estimates $\left(\hat{\tau}_j(X_i)\right)_{i=1}^n$ onto $X_{-l} = \left(X_{i,-l}\right)_{i=1}^n$ in line with the approach suggested in \textcite{hinesVIM}, and \textcite{zm}. This approach constitutes a certain type of meta-learning and convergence rates related to $(ii)$ are generally less known compared to the regression in assumption $(iii)$. We refer to \textcite{hinesVIM} for a discussion of a specific meta-learner termed the DR-learner (\cite{kennedyOptimal}) for estimation of $\hat{\tau}_j^l$ (their analogy is termed $\hat{\tau}_s$) and convergence rates analogous to $(ii)$. \\ \\
As in the ATE setting, the variance, $P\tilde{\psi}^2_{\Omega_j^l}$, is estimated by the cross-fitted plugin estimator:
$$
\hat{\sigma}_{\Omega_j^l}^{2,CF} = \sum_{k=1}^K \frac{n_k}{n} \mathbb{P}_n^k\tilde{\psi}_{\Omega_j^l}(\hat{\nu}^2_{l,-k})^2,
$$
where (with some abuse of notation) we define 
$$
\tilde{\psi}_{\Omega_j^l}(\hat{\nu}^2_{l,-k})(O) = \frac{1}{\hat{\chi}^{l, CF}}\left(\phi_{\Gamma_j^l}(O; \hat{\nu}_{l,-k}^2) - \hat{\Gamma}_j^{l,CF} - \hat{\Omega}_j^{l,CF} 
\left(\phi_{\chi^l}(O; \hat{\nu}_{l,-k}^1) - \hat{\chi}^{l,CF} \right) \right).
$$
Because of scale sensitivity, the estimate of $\Omega_j^l$ may be of less interest than testing the null-hypothesis $H_0:\Omega_j^l = 0$. A test statistic for $H_0$ can be defined by  
$$
\mbox{TST}_1^l\equiv \frac{\hat{\Omega}_j^{l,CF}}{\sqrt{\hat{\sigma}_{\Omega_j^l}^{2,CF}/n}}
$$
that is asymptotically standard normal distributed. Lemma 1 in \textcite{zm} shows that the cross-fitted variance estimators considered in this paper, i.e., $\hat{\sigma}_{\Omega_j^l}^{2,CF}$ and $\hat{\sigma}_{\psi_j}^{2,CF}$ are consistent. The following corollary (analogous to Corollary 4 in \cite{zm}) gives the desired asymptotic properties of our test-statistic:

\begin{corollary}
Under the same setup as in Theorem \ref{thm:anbplp}, we have under the null-hypothesis, $H_0:\Omega_j^l=0$, that 
$$
\mbox{TST}_1^l\overset{D}{\longrightarrow} \mathcal{N}(0,1).
$$
\end{corollary}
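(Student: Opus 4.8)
The plan is to reduce the statement to a direct application of Slutsky's lemma, combining the asymptotic normality of $\hat{\Omega}_j^{l,CF}$ established in Theorem \ref{thm:anbplp} with the consistency of the cross-fitted plug-in variance estimator. The crucial simplification is that under $H_0$ we have $\Omega_j^l = 0$, so that $\hat{\Omega}_j^{l,CF} = \hat{\Omega}_j^{l,CF} - \Omega_j^l$ and the numerator of $\mbox{TST}_1^l$ is exactly the centered quantity appearing in Theorem \ref{thm:anbplp}. No new analytic work on the estimator itself is required; the two distributional inputs are already in hand.

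Concretely, I would first rewrite the test statistic in standardized form,
$$
\mbox{TST}_1^l = \frac{\sqrt{n}\,\hat{\Omega}_j^{l,CF}}{\sqrt{\hat{\sigma}_{\Omega_j^l}^{2,CF}}} = \frac{\sqrt{n}\,(\hat{\Omega}_j^{l,CF} - \Omega_j^l)}{\sqrt{\hat{\sigma}_{\Omega_j^l}^{2,CF}}},
$$
where the second equality uses $\Omega_j^l = 0$ under $H_0$. By Theorem \ref{thm:anbplp}, and since its hypotheses are assumed to hold here, the numerator converges in distribution to $\mathcal{N}(0,\sigma^2)$ with $\sigma^2 = P\tilde{\psi}_{\Omega_j^l}^2$. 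Next I would invoke the consistency of the variance estimator: as stated in the text, Lemma 1 in \textcite{zm} gives $\hat{\sigma}_{\Omega_j^l}^{2,CF} \overset{P}{\rightarrow} \sigma^2$ under the same assumptions. Provided $\sigma^2 > 0$, the map $x \mapsto x^{-1/2}$ is continuous at $\sigma^2$, so the continuous mapping theorem yields $(\hat{\sigma}_{\Omega_j^l}^{2,CF})^{-1/2} \overset{P}{\rightarrow} \sigma^{-1}$, and Slutsky's lemma then gives
$$
\mbox{TST}_1^l \overset{D}{\longrightarrow} \frac{1}{\sigma}\,\mathcal{N}(0,\sigma^2) = \mathcal{N}(0,1).
$$

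The only genuine point requiring care is non-degeneracy: the division by $\sqrt{\hat{\sigma}_{\Omega_j^l}^{2,CF}}$ is only asymptotically well-behaved, and the limiting ratio is a proper standard normal rather than a degenerate law, when $\sigma^2 = P\tilde{\psi}_{\Omega_j^l}^2 > 0$. I would establish this either directly, noting that $\tilde{\psi}_{\Omega_j^l}$ is the (mean-zero, non-constant) efficient influence function of a nontrivial ratio parameter and is therefore non-degenerate under any $P$ satisfying the overlap and positivity conditions \ref{ass:pos} and \ref{ass:nuisPos}, or simply carry it as a mild regularity requirement. Everything beyond this is mechanical, so I do not anticipate a substantive obstacle; the work has effectively been done in Theorem \ref{thm:anbplp} and the cited consistency lemma, and the corollary follows by assembling these two facts through Slutsky's lemma.
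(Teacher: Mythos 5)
Your proposal is correct and follows essentially the same route as the paper's own proof: both combine the asymptotic normality from Theorem \ref{thm:anbplp} with the consistency of $\hat{\sigma}_{\Omega_j^l}^{2,CF}$ (Lemma 1 of the cited reference) and conclude via Slutsky's lemma. Your explicit attention to the non-degeneracy condition $P\tilde{\psi}_{\Omega_j^l}^2 > 0$ is a minor point of added care that the paper leaves implicit, but it does not constitute a different approach.
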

\begin{proof}
    Since the variance estimator $\hat{\sigma}_{\Omega_j^l}^{2,CF}$ is consistent, Theorem \ref{thm:anbplp} together with Slutsky's theorem and an application of the delta method gives the result.    
\end{proof}

\section{Simulation study}\label{sec:sim}
We conduct simulation studies to investigate the proposed asymptotic properties of the estimators $\hat{\psi}_j^{CF}$ and $\hat{\Omega}_j^{l,CF}$ under two different nuisance estimator settings with and without cross-fitting (i.e. setting $K=1$). For all cross-fitted estimators, we set $K=10$. In one nuisance estimator setting we consider correctly specified (semi)parametric nuisance estimators and in the other we use completely nonparametric estimators via random forest. The parametric nuisance estimators adhere to assumption \ref{ass:nuis} and so we would expect the target parameter estimators to perform according to theory both with and without cross-fitting. In case of nonparametric estimators, random survival forest are shown to adhere to assumption \ref{ass:nuisCons} in \textcite{cuiCons}, but it is unclear to what extend they admit rates corresponding to \ref{ass:nuisDouble}. Furthermore, the nonparametric estimators do not in general belong to a Donsker class, and we therefore expect the cross-fitted estimators to perform more in line with the theory compared to the non-cross-fitted version (see \cite{cher} and \cite{kennedydouble} for a discussion on cross-fitted one-step estimators). 

We consider data generated from the following models:
\begin{itemize}
    \item $X_l \sim \text{Unif}[-1,1], \ l = 1, \ldots, 4$ 
    \item $\pi(1\mid X) = \text{expit}(0.5X_1 + 0.5X_2)$
    \item $\lambda_1(t\mid A, X) = 0.0025\cdot 2t^{2-1} \exp(-X_1 - X_2 - 0.2X_3 + A(0.5X_1 - 0.3X_2 - 2))$
    \item $\lambda_2(t\mid A, X) = 0.00025\cdot 2t^{2-1} \exp(-X_1 - X_2 - 0.2X_3 + A)$
    \item $\lambda_c(t\mid A, X) = 0.00025\cdot 2t^{2-1} \exp(-0.5X_1)$.
\end{itemize}
Note that the above hazard functions correspond to Cox models with baseline hazards given by a Weibull hazard, $bkt^{k-1}$, where $b$ and $k$ are the scale and shape parameter, respectively. We consider four sample size settings of $n=250, 500, 750, 1000$, and for each setting we run 1000 simulations. For each simulation we generate data according to the models above and estimate the target parameters according to specifications given below. 

\subsection{Average treatment effect}
For estimation of $\hat{\psi}_1^{CF}$ we choose the time-horsizon $t^*=30.$ The true ATE is approximately $\psi_1(P_0) = -9.6135$. We consider two nuisance setting for estimation of $\hat{\nu}$ (here dropping $k$ from the notation). In one setting $\hat{\nu}$ consists of correctly specified Cox models with corresponding Breslow estimators for the cumulative hazards, and a correctly specified logistic regression for the propensity score model. This setting will be abbreviated \textbf{cor} in tables and figures going forward. In the second setting we estimate the cumulative hazards by random survival forests (\cite{ishwaran}) as implemented in the \verb|R|-package \verb|randomForestSRC| with default tuning parameters (see \cite{rfsrc} for documentation), and we estimate the propensity score by random forest, again with the implementation and tuning parameters given by \verb|randomForestSRC|. This setting will be abreviated \textbf{RF} in tables and figures going forward. Furthermore, each setting will be given the suffix \textbf{CF} if cross-fitting is used.

\begin{table}[h!]
    \centering
    \begin{tabular}{|rlrrrr|}
  \hline
n & method & bias & SD & mean SE & coverage \\ 
  \hline
 250 & \textbf{cor} & -0.1521 & 0.9424 & 0.9746 & 0.9550 \\ 
     & \textbf{corCF} & -0.0494 & 0.9701 & 1.0340 & 0.9620 \\ 
     & \textbf{RF} & -0.3501 & 0.9020 & 0.5644 & 0.7520 \\ 
     & \textbf{RFCF} & 0.0953 & 1.2012 & 1.3429 & 0.9710 \\
     \hline
    500 & \textbf{cor} & -0.0980 & 0.6946 & 0.6893 & 0.9570 \\ 
     & \textbf{corCF} & -0.0538 & 0.7040 & 0.7084 & 0.9550 \\ 
     & \textbf{RF} & -0.3576 & 0.6750 & 0.3957 & 0.6800 \\ 
     & \textbf{RFCF} & -0.0383 & 0.7912 & 0.8799 & 0.9680 \\
     \hline
    750 & \textbf{cor} & -0.0665 & 0.5650 & 0.5633 & 0.9480 \\ 
     & \textbf{corCF} & -0.0377 & 0.5684 & 0.5735 & 0.9500 \\ 
     & \textbf{RF} & -0.3269 & 0.5538 & 0.3237 & 0.6750 \\ 
     & \textbf{RFCF} & -0.0678 & 0.6394 & 0.6970 & 0.9670 \\ 
     \hline
   1000 & \textbf{cor} & -0.0265 & 0.4724 & 0.4884 & 0.9540 \\ 
    & \textbf{corCF} & -0.0046 & 0.4754 & 0.4949 & 0.9570 \\ 
    & \textbf{RF} & -0.2800 & 0.4636 & 0.2808 & 0.6950 \\ 
    & \textbf{RFCF} & -0.0319 & 0.5383 & 0.5943 & 0.9710 \\  
   \hline
\end{tabular}
    \caption{Results of 1000 simulations of estimators of $\Psi_1$ with varying nuisance estimators, and with and without cross-fitting for sample sizes $n=250, 500, 750, 1000$. The abbreveations of the methods are read as follows: \textbf{cor} corresponds to the nuisance parameters $\Lambda_1, \Lambda_2, \Lambda_c, \pi$ estimated by correctly specified Cox and logistic regressions, and \textbf{RF} corresponds to the same parameters estimated by Random Forest. A suffix \textbf{CF} indicates that cross-fitting was employed in estimation of $\Psi_j$. The tables gives the bias, empirical standard deviation (SD), mean of the estimated standard error (mean SE), and coverage.}
    \label{tab:simate}
\end{table}

Table \ref{tab:simate} gives the results for ATE-estimation. For correctly specified nuisance estimators the bias decreases with $n$ and standard deviations decrease at an approximately $n^{1/2}$-rate. With a coverage around $0.95$, even for relatively small $n$, it looks as if the estimator follows the asymptotic distribution from Theorem \ref{thm:anATE}. Surprisingly though, cross-fitting seems to decrease the bias of the estimator with correctly specified nuisance parameters even further. Overall we find that the estimator performs as expected for correctly specified (semi)parametric nuisance estimators.

For the nuisance estimators using random forests, we see a non-vanishing bias for the non-cross-fitted estimators. Furthermore, the standard errors are underestimated compared to the empirical standard deviation of the estimators which, together with the bias result in undercoverage. When using cross-fitting together with random forests, the bias disappears on roughly the same order as the correctly specified parametric estimators (without cross-fitting), and the standard deviation of the estimator seem to be on roughly $\sqrt{n}$-rate, as with the correctly specified nuisance parameters. The standard errors seem to be slighty overestimated, though, resulting in a slight overcoverage. This might be due to the hyperparameter choices for the random forests.

\subsection{Best partially linear projection}
For estimation of $\hat{\Omega}_1^{l,CF}$, we set $t^*=30$, as for ATE-estimation. The true value of the target parameters are approximately $(\Omega_1^1, \Omega_1^2, \Omega_1^3, \Omega_1^4) = (4.949, 3.137, 0.737, 0)$. We also need estimates of $\hat{\tau}^l$ and $\hat{E}^l$, for which we will consider two settings. In the first, both $\hat{\tau}^l$ and $\hat{E}^l$ are estimated with a generalised additive model (GAM) including spline smoothing of each term but without interactions as implemented in the \verb|R|-package \verb|mgcv|, and in the second, each of the nuisance parameters are estimated with random forest (again with default tuning parameters from  \verb|randomForestSRC|). The GAM setting will be added to the correctly specified setting, \textbf{cor}, from earlier in tables and figures going forward, and the random forest setting will be added to the random forest setting, \textbf{RF}, from earlier.

\begin{figure}[h!]
    \centering
    \includegraphics[scale = 0.7]{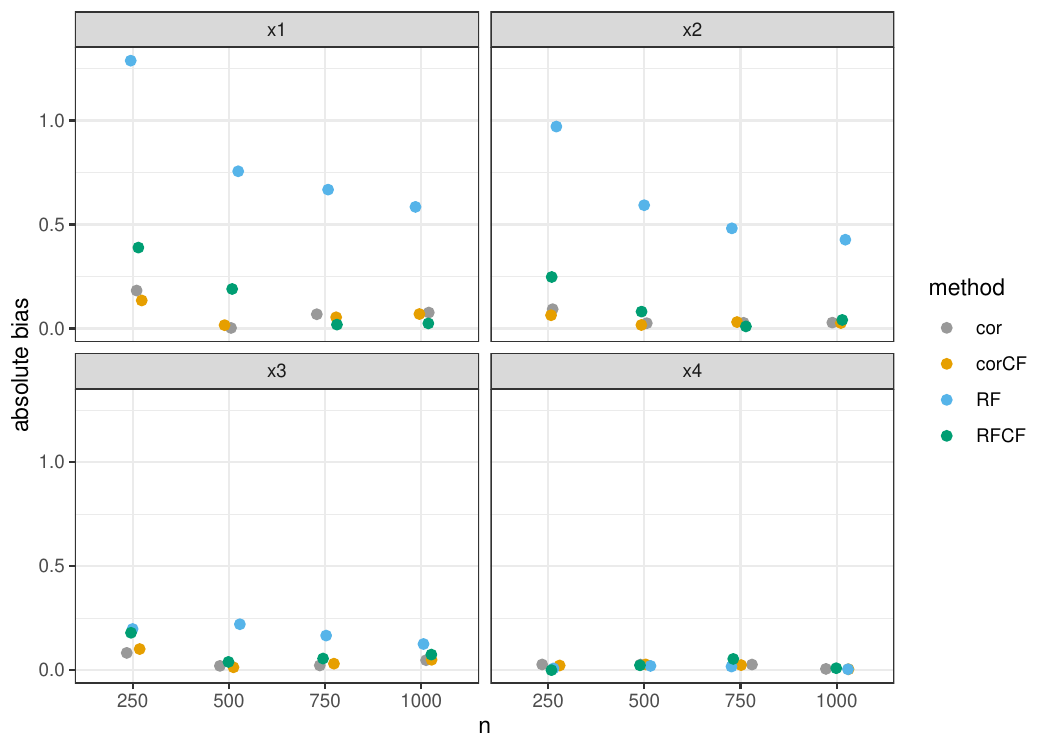}
    \caption{Results based on 1000 simulations of the estimators of $\Omega_1^l$ with for $l=1,\ldots,4$ with varying nuisance estimators and across sample sizes $n=250, 500, 750, 1000$. The plot shows the absolute bias of the estimators, where \textbf{cor} corresponds to the nuisance parameters $\Lambda_1, \Lambda_2, \Lambda_c, \pi$ estimated by correctly specified Cox and logistic regressions, and \textbf{RF} corresponds to the same parameters estimated by Random Forest. A suffix \textbf{CF} indicates that cross-fitting was employed. The true values are $(\Omega_1^1, \Omega_1^2, \Omega_1^3, \Omega_1^4) = (4.949, 3.137, 0.737, 0)$.}
    \label{fig:bias-omega}
\end{figure}

\begin{figure}[h!]
    \centering
    \includegraphics[scale = 0.7]{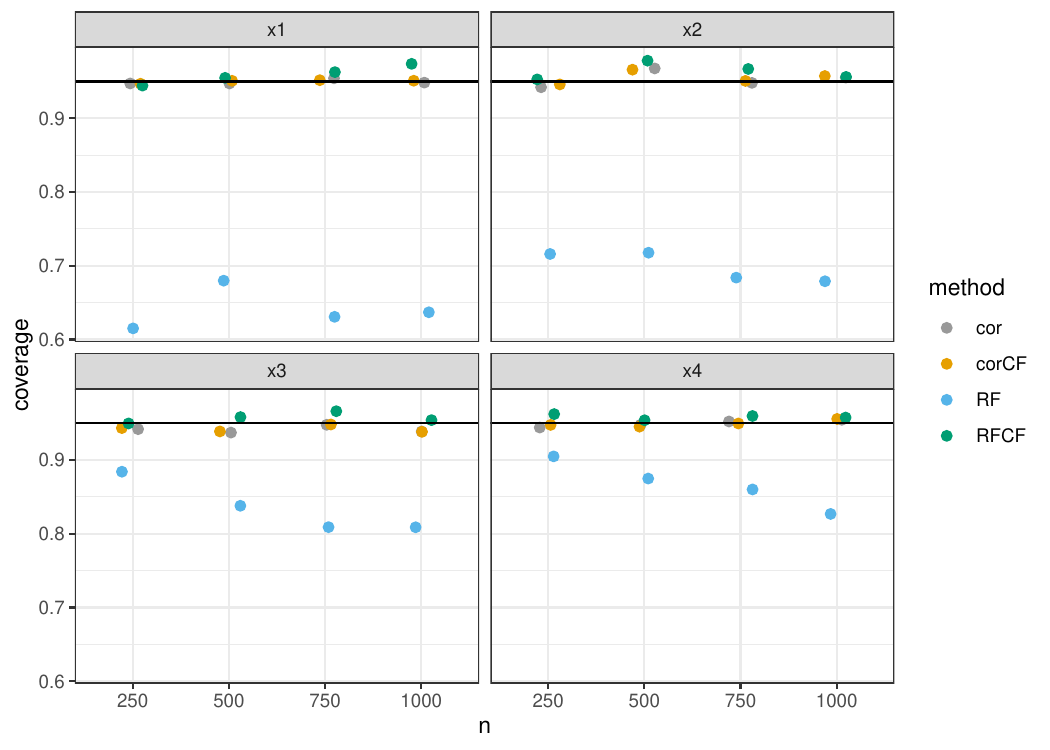}
    \caption{Results based on 1000 simulations of the estimators of $\Omega_1^l$ with for $l=1,\ldots,4$ with varying nuisance estimators and across sample sizes $n=250, 500, 750, 1000$. The plot shows coverage of the estimators, where \textbf{cor} corresponds to the nuisance parameters $\Lambda_1, \Lambda_2, \Lambda_c, \pi$ estimated by correctly specified Cox and logistic regressions, and \textbf{RF} corresponds to the same parameters estimated by Random Forest. A suffix \textbf{CF} indicates that cross-fitting was employed. The black line indicates a coverage of $0.95$.}
    \label{fig:coverage-omega}
\end{figure}

In Figure \ref{fig:bias-omega}, we see the absolute bias for estimation of $\hat{\Omega}_j^l$, $l = 1, \ldots, 4$, for the different nuisance settings. In general, we see that \textbf{cor} and \textbf{corCF} perform similarly across all sample sizes and across all $l$, with a bias converging to zero. The \textbf{RFCF}-setting performs slightly worse than \textbf{cor} and \textbf{corCF} for small $n$, but has approximately similar performance for large $n$, whereas \textbf{RF} has a large bias for large enough values of $\Omega_1^l$. Generally, the estimators seem to perform as we would expect according to Theorem \ref{thm:anbplp} in terms of bias. 
The coverage of the estimators are presented in Figure \ref{fig:coverage-omega}. The settings \textbf{cor}, \textbf{corCF} and \textbf{RFCF} all exhibit approximately nominal coverage across $n$ and $l$, whereas \textbf{RF} has poor coverage. Again, this is in line with our expectations. Figure \ref{fig:type1} gives the estimated probability of rejecting $H: \Omega_1^4 = 0$, i.e. the type-1 error (since $\Omega_1^4 = 0$ in our data generating mechanism), together with Monte Carlo confidence intervals. The type-1 error is approximately $0.05$ for all $n$, except for \textbf{RF} where the type 1-error increases with $n$. 

\begin{figure}[h!]
    \centering
    \includegraphics[scale = 0.7]{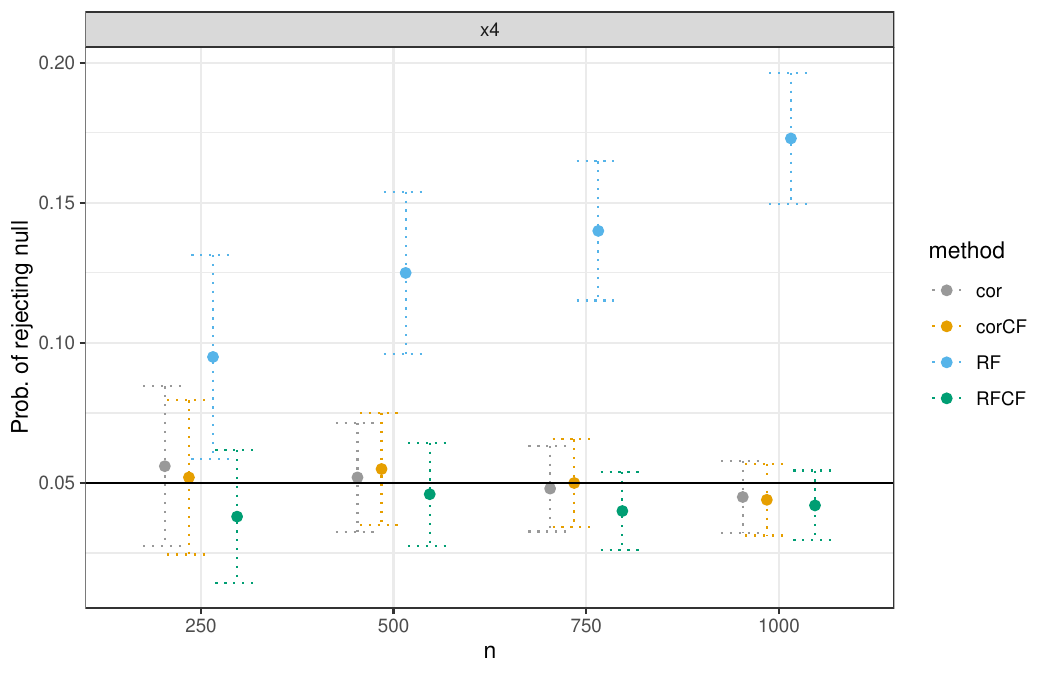}
    \caption{Results based on 1000 simulations of the test statistic corresponding to the test $H_0:\Omega_1^4 = 0$ with varying nuisance estimators and across sample sizes $n=250, 500, 750, 1000$. The plot shows probability of rejecting $H_0$, which equals the type-1 error as $\Omega_1^4=0$. \textbf{cor} corresponds to the nuisance parameters $\Lambda_1, \Lambda_2, \Lambda_c, \pi$ estimated by correctly specified Cox and logistic regressions, and \textbf{RF} corresponds to the same parameters estimated by Random Forest. A suffix \textbf{CF} indicates that cross-fitting was employed.}
    \label{fig:type1}
\end{figure}

\begin{figure}[h!]
    \centering
    \includegraphics[scale = 0.7]{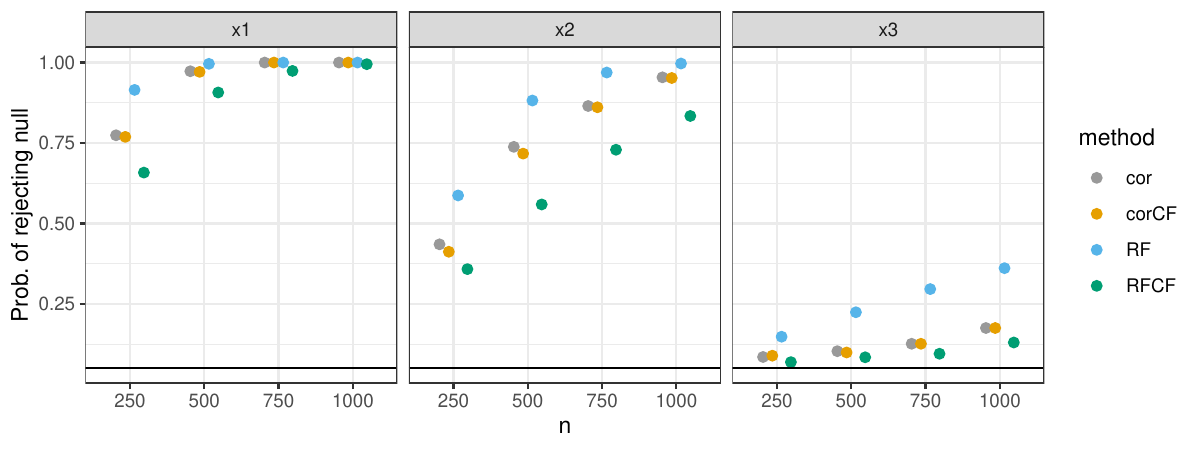}
    \caption{Results based on 1000 simulations of the test statistic corresponding to the test $H_0:\Omega_1^l = 0$, for $l=1,2,3$, with varying nuisance estimators and across sample sizes $n=250, 500, 750, 1000$. The plot shows probability of rejecting $H_0$, which corresponds to the power of the test as $\Omega_1^l > 0$ for $l=1,2,3$. \textbf{cor} corresponds to the nuisance parameters $\Lambda_1, \Lambda_2, \Lambda_c, \pi$ estimated by correctly specified Cox and logistic regressions, and \textbf{RF} corresponds to the same parameters estimated by Random Forest. A suffix \textbf{CF} indicates that cross-fitting was employed.}
    \label{fig:power}
\end{figure}
Lastly, Figure \ref{fig:power} shows the probability of rejecting $H: \Omega_1^l = 0$, $l = 1,2,3$, which correspond the power of the test. Interestingly, using data-adaptive estimation of the nuisance parameters seem to decrease the power of the test $TST_1^l$. 

\section{Application}\label{sec:application}
To demonstrate the methods outlined in the previous sections, we consider the study by \textcite{kessing}. In this study, the potentially non-response to 17 different antidepressants are compared based on data from the Danish national registers. Patients enter the study at their first diagnosis with major depressive disorder from a psychiatric hospital. Their treatment, in terms of a specific antidepressant, is defined as the first purchase of an antidepressant after discharge from the hospital, which also determines the index date. The main outcome is time to non-response, defined as a switch to or add-on of another antidepressant or antipsychotic medicine or readmission to a psychiatric hospital with a major depressive disorder. Competing risk for the time to non-response was admission to a psychiatric hospital with a higher order psychiatric diagnosis (bipolar disorder, schizophrenia or organic mental disorder) or death. The 17 antidepressants are categorised into six groups (SSRI, NARI, SNRI, NaSSA, TCA, and others) and within each group a reference drug is chosen to which the other drugs in that group are compared. The estimand for each comparison is the average treatment effect on the risk of non-response within two years after index date, i.e., it is defined as $\E\{F_1(t\mid A=1, X) - F_1(t\mid A=0, X)\}$, where $F_1$ is the conditional cumulative incidence function for non-response and $A=0$ denotes the reference drug and $A=1$ is the comparitor. The study includes patients from 1995-2018 and not all of the antidepressants considered were available on the marked in the entire period. Accordingly, for each comparison, a minimum date is set for which both drugs in comparison were available and all patients with index dates prior to the minimum date are excluded. 

\begin{table}[h!]
    \centering
    \begin{tabular}{|m{0.2\textwidth}|m{0.6\textwidth}|}
        \hline
        \textbf{Annotation} & \textbf{Explanation} \\
        \hline
        age & age in years at index-date \\
        \hline 
        sex & female, male \\
        \hline
        \multicolumn{2}{|m{0.8\textwidth}|}{\textbf{Secondary diagnosis from the psychiatric hospital at inclusion. The annotations reflect the ICD-10 codes used in the definition:}} \\
        \hline
        F10-19 & other psychiatric disorders\\
        \hline 
        F40-48 & neurotic, stress-related and somatoform disorders  \\
        \hline 
        F60-69 & personalitiy disorders \\
        \hline
        \multicolumn{2}{|m{0.8\textwidth}|}{\textbf{Diagnosis with somatic disorder within 10 years prior to index date. The annotation are given in form of the corresponding ICD-10 chapter:}} \\
        \hline
        I & infections \\
        \hline
        II &  neoplasms  \\
        \hline
        III & blood diseases  \\
        \hline
        IV+IX+X & endocrine, nutritional, and metabolic diseases
and diseases of the circulatory or respiratory system \\
        \hline
        VI-VIII & diseases of the nervous system, eye and ear \\
        \hline
        XI & diseases of the digestive system \\
        \hline
        XII & diseases of the skin and subcutaneous tissue \\
        \hline
        XIII & diseases if the musculoskeletal system \\
        \hline
        XIX & physical lesions and poisoning) \\
        \hline
    \end{tabular}
    \caption{Confounders in \textcite{kessing}}
    \label{tab:confounders}
\end{table}

For the sake of illustration we constrict ourselves to the comparison of Setraline (reference drug, $n=14416$) and Escitalopram (comparitor, $n=7508$). \textcite{kessing} used the G-formula with $F_1$ estimated by cause-specific Cox regressions (\cite{ozenne}) to estimate the average treatment effect. To control for confounding, the Cox-regressions were adjusted for the covariates in Table \ref{tab:confounders} and the ATE was estimated to be 0.10 (0.09, 0.12), that is, the probability of non-response was 0.1 higher amongst patients treated with Escitalopram within two years after treatment initiation. For comparison, instead of defining the treatment effect through the cumulative incidence function, $F_1$, we consider estimation of the ATE and the best partially linear projection based on the number of life-years-lost estimands defined in Section \ref{sec:estimands}. That is, we consider estimation of $\psi_1$ and $\Omega_1^l$ with $t^* = 730.5 \ days$ (2 years). We include the same confounders as in \textcite{kessing} with the exception that age is included as a numeric variable instead of a categorised version. The target parameters are estimated with the cross-fitted estimators described in Section \ref{sec:est} with all nuisance parameters estimated by random forests (as described in the simulation study) and $K=10$ folds.

The ATE is estimated to 48.96 (40.02, 57.90). The interpretation here is that patients on Setraline on average lost 49 "healthy" days less before two years after treatment initiation due to non-response compared to patients who start on Escitalopram, where "healthy" is meant as time without a non-response event or a competing event. Table \ref{tab:vim-application} shows the estimates $\hat{\Omega}_1^{l,CF}$, for $l$ given by each confounder in Table \ref{tab:confounders}, together with a p-value associated with the test statistic $\mbox{TST}_1^l$.

\nothere{
\begin{table}[h!]
    \centering
    \begin{tabular}{|c|c|c|}
    \hline
        & $\hat{\Omega}_1^{l,CF}$ & p-value \\
        \hline
    age & -0.85 & 0.029 \\
    \hline 
    sex & 18.69 & 0.035 \\
    \hline
    XIII & -28.26 & 0.169 \\
    \hline 
    XI & -16.64 & 0.274 \\
    \hline
    F60-69 & 15.05 & 0.297 \\
    \hline
    II & 17.35 & 0.364 \\
    \hline
    VI-VIII & 8.88 & 0.403 \\
    \hline
    IV+IX+X & 4.38 & 0.733 \\
    \hline
    F40-48 & 5.21 & 0.758 \\
    \hline
    III & -4.80 & 0.818 \\
    \hline
    XIX & -2.85 & 0.826 \\
    \hline
    XII & -2.00 & 0.875 \\
    \hline
    I & -0.41 & 0.981 \\
    \hline
    F10-19 & 0.23 & 0.982 \\
    \hline
    \end{tabular}
    \caption{Estimates of $\hat{\Omega}_1^{l,CF}$ ranked according to the p-value associated with the test of $H:\Omega_1^l = 0$, for $l$ ranging over different covariates. Random Forest was used for all nuisance parameter estimators. The data comes from the study \textcite{kessing}, and the outcome is time to non-response, which is defined as a switch in psychiatric treatment or re-hospitalization at psychiatric ward. The treatment effect was defined as the difference in the number of healthy days lost (days without switch of treatment or re-hospitalization) due to non-response before two years after treatment initiation between Escitalopram and Setraline.}
    \label{tab:vim-application}
\end{table}
}

\begingroup

\setlength{\tabcolsep}{1.3pt} 
\renewcommand{\arraystretch}{0.3} 

\begin{table}[h!]
\centering
\begin{tabular}{c|ccccccccccccccc}
&age & sex&{\small XIII} & {\small XI} &{\small F60-69} &{\small II} & {\small VI-VIII} &{\small IV+IX+X} &{\small F40-48} & {\small III} & {\small XIX} & {\small XII} &{\small I} & {\small F10-19}
 \\\hline
 & & & & & & &  & & &  &  &  &  &  \\
$\hat{\Omega}_1^{l,CF}$ &-0.85&18.7&-28.3&-16.7&15.1&17.4&8.88&4.38&5.21&-4.80&-2.85&-2.00&-0.41&0.23\\
& & & & & & &  & & &  &  &  &  &  \\
 p-value& 0.03&0.04 &0.17&0.27&0.30&0.36&0.40&0.73&0.76&0.82&0.83&0.88&0.98&0.98\\
 & & & & & & &  & & &  &  &  &  &  \\
\hline
\end{tabular}
    \caption{Estimates of $\hat{\Omega}_1^{l,CF}$ ranked according to the p-value associated with the test of $H:\Omega_1^l = 0$, for $l$ ranging over different covariates. Random Forest was used for all nuisance parameter estimators. The data comes from the study \textcite{kessing}, and the outcome is time to non-response, which is defined as a switch in psychiatric treatment or re-hospitalization at psychiatric ward. The treatment effect was defined as the difference in the number of healthy days lost (days without switch of treatment or re-hospitalization) due to non-response before two years after treatment initiation between Escitalopram and Setraline.}
    \label{tab:vim-application}
\end{table}   

\endgroup

The p-values indicate that potential treatment effect heterogeneity can be attributed to sex and age, while the treatment effect does not seem to vary across any of the other variables. Specifically, since $\Omega_1^l$ is defined as the projection of the CATE function onto the partially linear model, the estimates related to sex and age can be interpreted as regression coefficients. The CATE function is defined as the difference in number of healthy days lost due to non-response between Escitalopram and Setraline, and the estimate $\hat{\Omega}_1^{sex, CF}=18.69\ $ then corresponds to the treatment effect being larger among women compared to men, i.e., the difference in number of healthy days lost due to non-response between Escitalopram and Setraline was larger among women. This interpretation is of course relying on the partially linear model to hold for the CATE function, but as the parameter still measures the association between $\tau$ and sex, when the partially linear model does not hold, we would still conclude, that the treatment effect is larger among women.

\section{Discussion}\label{sec:disc}
In this paper, we have introduced the causal effect of a treatment on the number of life-years lost due to a specific event. We have shown that the ATE and the CATE are identifiable from the observed data under common assumptions found throughout the causal inference and survival literature. Different measures of treatment effect in the presence of competing risk are available in causal inference (\cite{heleneMark}, \cite{heleneFrank}, \cite{ozenne}, \cite{torbenMats}) and the treatment effect studied in this paper adds a new interpretability compared to existing variants. One advantage is that the treatment effect is defined on the time scale of the study and it thus provides a quantity that is easy to communicate to non-statisticians, whereas treatment effects based on the cumulative incidence function are harder to communicate. Furthermore, as the treatment effect can be written as a difference of integrated cumulative incidence functions, it is not as sensitive to the choice of time horizon in terms of detecting an effect of treatment. As is common when assessing the treatment effect in the presence of competing risk, the effect of a treatment on the number of life years lost due to a specific event depends on the effect of the treatment on both the hazard of the event of interest and on the competing event. As such, one can in principle conclude that there is an effect of treatment, even when all of the effect is driven by the effect on the competing event. \textcite{torbenMats} provide a measure of separable treatment effects based on the cumulative incidence function, which allow one to estimate the effect of treatment only driven by the intensity of the event of interest under additional causal assumptions. An interesting avenue for future research is to extend their method to the number of life years lost due to a specific event. \\ \\
We have provided an estimator of the ATE based on semiparametric efficiency theory, which allows for data-adaptive estimation of the nuisance parameters. The estimator is efficient in the nonnparametric model with variance given by the efficient influence function. One of the assumptions needed to ensure the asymptotic results relies on convergence of a remainder term on $n^{-1/2}$-rate (assumption \ref{ass:nuisDouble}), which is a reminiscent of a similar assumption in the causal inference literature with censored data (\cite{westling}, \cite{heleneFrank}). Without assuming absolute continuity of the hazard estimators, it is difficult to  obtain an equivalent double rate-robustness property as is seen in the literature on uncensored data (e.g. \cite{kennedydouble}, \cite{hinesDemystifying}, \cite{cher}, \cite{vanRose}). Accordingly, we conducted a simulation study, where the nuisance parameters were estimated by different variants of random forests, which confirmed that the estimator performed according to asymptotic results when using data-adaptive nuisance estimators. \\ \\
Lastly, we extended a measure of treatment effect heterogeneity, termed the best partially linear projection (\cite{zm}), to the CATE-function defined on the number of life-years lost due to a specific event. The measure asserts the importance of a given covariate on the treatment effect, but with competing risk the interpretation is more delicate compared to the survival setting. When the effect of treatment on the competing event is large, one can imagine scenarios where the importance of one covariate on the CATE is driven by the effect on the competing event, and a ranking of importance (shown in Section \ref{sec:application}) based on the event of interest might be misleading. In such scenarios, one can switch the event of interest and competing event and make a separate ranking of the covariates to get a full picture. 

\clearpage
\printbibliography
\newpage
\appendix
\section{Identification of causal estimands}\label{app:identification}
We present an argument for the identification results in the main text. The argument follows usual steps in the causal inference literature on censored data, and it is based of a combination of the G-formula (\cite{robins}) and identification results from the literature on survival analysis (see e.g. \cite{ABGK} and \cite{ms}).

As noted in \textcite{andersen}, $T_j$ is improper due to $P(T_j = \infty) > 0$, but the random variable $T_j\wedge t^*$ is proper with an expectation given by
$$
E\{T_j\wedge t^*\} = t^* - \int_0^{t^*} F_j(s) \dd s.
$$
Hence 
$$
E(T_j\wedge t^* \mid a, x ) = t^* - \int_0^{t^*} F_j(s\mid a, x) \dd s.
$$
which is identified in the observed data under assumption \ref{ass:indepCens} and \ref{ass:pos} (\cite{ABGK}). For the ATE, this allows us to write
\begin{align*}
    & \phantom{= \ } E_0\{Y_j^1(t^*) - Y_j^0(t^*)\} \\
  &= E_0\{\E(Y_j^1(t^*) - Y_j^0(t^*)\mid X)\} \\
  \overset{ass. \ref{ass:exch}}&{=}  E_0\{\E(Y_j^1(t^*)\mid A = 1, X) - \E(Y_j^1(t^*)\mid A = 0, X)\} \\
  \overset{ass. \ref{ass:cons}}&{=}  E_0\{\E(Y_j(t^*)\mid A = 1, X) - \E(Y_j(t^*)\mid A = 0, X)\} \\
  \overset{ass. \ref{ass:indepCens}}&{=}  E_0\{ L_j(0,t^*\mid A=1, X) - L_j(0,t^*\mid A=0, X)\} 
\end{align*}
where assumption \ref{ass:pos} ensures that all conditional distributions are well defined. We note that CATE is identified by the same arguments.
\section{Derivation of influence functions}\label{app:eif}
In the following we consider the parametric submodel $P_\epsilon = \epsilon Q + (1-\epsilon)P$, where $Q$ is the Dirac measure with pointmass in the observation $O = (\tilde{T}, \tilde{\Delta}, A, X)$, and we define the operator $\partial_\epsilon$ with $\partial_\epsilon f_\epsilon = \frac{d}{d \epsilon}|_{\epsilon = 0} f_\epsilon$.
\begin{alemma}\label{gat:dlambda}
    The Gateaux derivative of $\Lambda_j(ds\mid a, x)$ is given by
    \begin{align}
    \partial_\epsilon \Lambda_{j,\epsilon}(ds\mid a, x) = \frac{1}{P(\tilde{T} \geq s, a, x)}\left(Q(ds, \Delta = j, a, x) - \mathbb{1}(\tilde{T}\geq s, a, x)\Lambda_j(ds\mid a, x)\right)
    \end{align}
    where 
    $$
    P(\tilde{T} \geq s, a, x) = \sum_{\delta = 0}^2\int_s^\infty P(ds, \delta, a, x).
    $$
\end{alemma}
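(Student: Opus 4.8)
The plan is to write the conditional cumulative cause-specific hazard as a ratio of two joint sub-measures and then differentiate that ratio along the contamination path. Starting from the standard identity that the cause-specific hazard is the conditional sub-distribution of an event of type $j$ occurring in $\dd s$ given being at risk at time $s$, I would record the representation $\Lambda_j(\dd s\mid a,x) = P(\dd s, \tilde{\Delta}=j, a, x)/P(\tilde{T}\geq s, a, x)$, where both numerator and denominator are evaluated under $P$ and the common $P(a,x)$ factor in the conditional version has been cancelled. Positivity (assumption \ref{ass:pos}) guarantees that $P(\tilde{T}\geq s, a, x)$ is bounded away from zero on $[0,t^*]\times\mathcal{X}$, so the ratio and its $\epsilon$-derivative are well defined.

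Next I would substitute the submodel $P_\epsilon = \epsilon Q + (1-\epsilon)P$ into both numerator and denominator. Since $P_\epsilon$ is affine in $\epsilon$, both $P_\epsilon(\dd s, \tilde{\Delta}=j, a, x)$ and $P_\epsilon(\tilde{T}\geq s, a, x)$ are affine in $\epsilon$ for each fixed $(s,a,x)$, and hence $\epsilon \mapsto \Lambda_{j,\epsilon}(\dd s\mid a,x)$ is a ratio of affine functions of the scalar $\epsilon$. I would then apply the quotient rule at $\epsilon = 0$, using that $\partial_\epsilon$ of the numerator equals $Q(\dd s, \tilde{\Delta}=j, a, x) - P(\dd s, \tilde{\Delta}=j, a, x)$ and $\partial_\epsilon$ of the denominator equals $Q(\tilde{T}\geq s, a, x) - P(\tilde{T}\geq s, a, x)$.

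Expanding the quotient rule, the two $P\cdot P$ cross terms cancel, leaving a numerator of the form $Q(\dd s, \tilde{\Delta}=j, a, x)\,P(\tilde{T}\geq s, a, x) - P(\dd s, \tilde{\Delta}=j, a, x)\,Q(\tilde{T}\geq s, a, x)$ over $P(\tilde{T}\geq s, a, x)^2$. Factoring out $1/P(\tilde{T}\geq s, a, x)$ and recognizing the remaining ratio $P(\dd s, \tilde{\Delta}=j, a, x)/P(\tilde{T}\geq s, a, x)$ as $\Lambda_j(\dd s\mid a,x)$, together with the fact that the Dirac measure $Q$ at $O$ gives $Q(\tilde{T}\geq s, a, x) = \mathbb{1}(\tilde{T}\geq s, a, x)$, delivers exactly the claimed expression.

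The computation itself is a one-line application of the quotient rule, so the only delicate point is the measure-theoretic bookkeeping: when $X$ is continuous the objects $P(\dd s, \delta, a, x)$ and $Q(\dd s, \delta, a, x)$ must be read as joint densities in the continuous coordinates and point masses in the discrete ones, and the Dirac contamination $Q$ introduces point masses in the $(a,x)$-coordinates as well. This is the standard formal device for Gateaux-derivative and influence-function calculations; it is made rigorous by fixing a dominating measure and interpreting $\partial_\epsilon$ as a pathwise derivative along the one-dimensional path $P_\epsilon$, with positivity ensuring the denominator stays away from zero for $\epsilon$ near $0$. I expect this bookkeeping, rather than any analytic difficulty, to be the only thing that needs care.
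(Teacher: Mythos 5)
Your proposal is correct and follows essentially the same route as the paper's own proof: representing $\Lambda_{j,\epsilon}(\dd s\mid a,x)$ as the ratio $P_\epsilon(\dd s,\Delta=j,a,x)/P_\epsilon(\tilde{T}\geq s,a,x)$, exploiting that the contamination path is affine so $\partial_\epsilon P_\epsilon = Q - P$, applying the quotient rule at $\epsilon=0$, cancelling the cross terms, and identifying $Q(\tilde{T}\geq s,a,x)=\mathbb{1}(\tilde{T}\geq s,a,x)$ for the Dirac measure. Your closing remark on the measure-theoretic reading of $P(\dd s,\delta,a,x)$ and the point masses introduced by $Q$ is a point the paper leaves implicit, but it does not change the substance of the argument.
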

\begin{proof}
Observe
    \begin{align*}
       & \partial_\epsilon \Lambda_{j,\epsilon}(ds\mid a, x) \\
     = & \partial_\epsilon \frac{P_\epsilon(ds, \Delta = j, a, x)}{P_\epsilon(\tilde{T}\geq s, a, x)} \\
     = & \frac{Q(ds, \Delta = j, a, x) - P(ds, \Delta = j, a, x)}{P(\tilde{T}\geq s, a, x)} - \partial_\epsilon P_\epsilon(\tilde{T}\geq s, a, x) \frac{P(ds, \Delta = j, a, x)}{P(\tilde{T}\geq s, a, x)^2} \\
     = & \frac{Q(ds, \Delta = j, a, x) - P(ds, \Delta = j, a, x)}{P(\tilde{T}\geq s, a, x)} \\
       & - \sum_{\delta = 0}^2\left(\mathbb{1}(\tilde{T}\geq s, \delta, a, x) - P(\tilde{T}\geq s, \delta, a, x) \right)\frac{P(ds, \Delta = j, a, x)}{P(\tilde{T}\geq s, a, x)^2} \\
     = & \frac{1}{P(\tilde{T}\geq s, a, x)}\left( Q(ds, \Delta = j, a, x) - \mathbb{1}(\tilde{T}\geq s, a, x)\Lambda_j(ds\mid a, x) \right)
    \end{align*}
\end{proof}

\begin{alemma}\label{gat:S}
    The Gateaux derivative of $S(s\mid a,x)$ is given by 
    \begin{align}
        \partial_\epsilon S_\epsilon(s\mid a, x) = -S(s\mid a, x)\frac{\mathbb{1}(A=a, X=x)}{\pi(a\mid x), f(x)} \int_0^s \frac{\dd M_1(u\mid a, x) + \dd M_2(u\mid a,x)}{P(\tilde{T}\geq s, a, x)}
    \end{align}
\end{alemma}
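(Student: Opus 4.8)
The plan is to differentiate through the exponential representation of the survival function and reduce everything to the Gateaux derivative of the cumulative hazards already established in Lemma \ref{gat:dlambda}. Since the paper defines $S$ via the absolutely continuous form $S_\epsilon(s\mid a,x) = \exp\{-\Lambda_{1,\epsilon}(s\mid a,x) - \Lambda_{2,\epsilon}(s\mid a,x)\}$, it is a smooth function of the two cumulative hazards along the submodel $P_\epsilon$, so the chain rule gives
$$
\partial_\epsilon S_\epsilon(s\mid a,x) = -S(s\mid a,x)\left[\partial_\epsilon\Lambda_{1,\epsilon}(s\mid a,x) + \partial_\epsilon\Lambda_{2,\epsilon}(s\mid a,x)\right],
$$
where the factor $S(s\mid a,x)$ is the value at $\epsilon = 0$. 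This chain-rule step is the only genuinely new computation; everything else is substitution and rewriting.

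Next I would express each cumulative-hazard derivative as an integral of its increment, $\partial_\epsilon\Lambda_{j,\epsilon}(s\mid a,x) = \int_0^s \partial_\epsilon\Lambda_{j,\epsilon}(\dd u\mid a,x)$, and plug in Lemma \ref{gat:dlambda} directly, obtaining for $j=1,2$
$$
\partial_\epsilon\Lambda_{j,\epsilon}(s\mid a,x) = \int_0^s \frac{Q(\dd u, \Delta=j, a, x) - \mathbb{1}(\tilde{T}\geq u, a, x)\Lambda_j(\dd u\mid a,x)}{P(\tilde{T}\geq u, a, x)}.
$$

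The final step is to recognize the conditional martingale structure in the numerator. Because $Q$ is the Dirac measure at the observed $O = (\tilde{T},\tilde{\Delta},A,X)$, the term $Q(\dd u,\Delta=j,a,x)$ equals $\mathbb{1}(A=a,X=x)\,N_j(\dd u)$ and $\mathbb{1}(\tilde{T}\geq u, a, x) = \mathbb{1}(A=a,X=x)\,\mathbb{1}(\tilde{T}\geq u)$, so the numerator factors as
$$
\mathbb{1}(A=a,X=x)\left[N_j(\dd u) - \mathbb{1}(\tilde{T}\geq u)\Lambda_j(\dd u\mid a,x)\right] = \mathbb{1}(A=a,X=x)\,\dd M_j(u\mid a,x),
$$
by the definition of the conditional martingale $M_j$. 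Summing over $j=1,2$, pulling the indicator $\mathbb{1}(A=a,X=x)$ out of the integral, and factoring the joint density as $P(\tilde{T}\geq u, a, x) = P(\tilde{T}\geq u\mid a, x)\,\pi(a\mid x)\,f(x)$ reassembles the claimed expression, with $\pi(a\mid x)f(x)$ emerging in front and the conditional survival remaining in the integrand. The only point requiring care is this bookkeeping of the Dirac mass together with the factorization of the joint "density" into conditional times marginal; once the numerator is correctly identified as $\mathbb{1}(A=a,X=x)\,\dd M_j$, the result follows immediately and no further estimates are needed.
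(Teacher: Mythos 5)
Your proposal is correct and follows essentially the same route as the paper's proof: both apply the chain rule to $S_\epsilon = \exp\{-\Lambda_{1,\epsilon}-\Lambda_{2,\epsilon}\}$ and then substitute the Gateaux derivative of the cumulative hazards from Lemma \ref{gat:dlambda}. The only difference is presentational — you perform the chain rule first and then substitute, and you spell out the Dirac-mass bookkeeping (identifying $Q(\dd u,\Delta=j,a,x)$ with $\mathbb{1}(A=a,X=x)N_j(\dd u)$ and factoring the joint probability into $P(\tilde{T}\geq u\mid a,x)\pi(a\mid x)f(x)$) that the paper leaves implicit in passing to its display \eqref{gat:Lambda}.
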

\begin{proof}
    First we note that
    \begin{align}
        \partial_\epsilon \Lambda_{j, \epsilon}(s\mid a, x) =&  \int_0^s \partial_\epsilon \Lambda_{j,\epsilon}(ds\mid a, x) \nonumber \\
        =& \frac{\mathbb{1}(A=a, X=x)}{\pi(a\mid x)f(x)} \int_0^s \frac{\dd M_j(u\mid a, x)}{P(\tilde{T}\geq s\mid a, x)} \label{gat:Lambda}
    \end{align}
    by lemma \ref{gat:dlambda}. Then
    \begin{align*}
        \partial_\epsilon S(s\mid a, x) &= \partial_\epsilon \exp(-\left[ \Lambda_{1, \epsilon}(s\mid a,x) + \Lambda_{2, \epsilon}(s\mid a,x) \right]) \\
        &= -S(s\mid a, x) \partial_\epsilon \left[ \Lambda_{1, \epsilon}(s\mid a,x) + \Lambda_{2, \epsilon}(s\mid a,x) \right].
    \end{align*}
    Applying \eqref{gat:Lambda} gives the result.
\end{proof}

\begin{alemma}\label{lem:gatLj}
    The Gateaux derivative of $L_j(0,t^*\mid a, x)$ is given by
    \begin{align}
        \partial_\epsilon L_{j, \epsilon}(0,t^*\mid a, x) = \frac{\mathbb{1}(A=a, X=x)}{\pi(a\mid x)f(x)} \sum_{i=1,2} \int_0^{t^*} \frac{H_{ij}(s,t^*\mid a, x)}{S_C(s\mid a, x)} \dd M_i(s\mid a, x)
    \end{align}
    where
    $$
    H_{ij}(s,t\mid a,x) = \int_s^t \mathbb{1}(i=j) + \frac{F_j(s\mid a, x) - F_j(u\mid a, x)}{S(s\mid a, x)} \dd u.
    $$
\end{alemma}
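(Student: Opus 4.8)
The plan is to differentiate $L_j(0,t^*\mid a,x)=\int_0^{t^*}F_j(u\mid a,x)\,\dd u$ under the integral sign and reduce everything to the Gateaux derivatives already established in Lemma \ref{gat:S} and Lemma \ref{gat:dlambda}. First I would interchange $\partial_\epsilon$ with $\int_0^{t^*}\dd u$ to get $\partial_\epsilon L_{j,\epsilon}(0,t^*\mid a,x)=\int_0^{t^*}\partial_\epsilon F_{j,\epsilon}(u\mid a,x)\,\dd u$. Writing $F_j(u\mid a,x)=\int_0^u S(s\mid a,x)\,\dd\Lambda_j(s\mid a,x)$ and applying the product rule in $\epsilon$ then splits $\partial_\epsilon F_{j,\epsilon}$ into one term carrying $\partial_\epsilon S_\epsilon$ and one term carrying $\partial_\epsilon\Lambda_{j,\epsilon}$, each of which I can substitute from Lemma \ref{gat:S} and \eqref{gat:Lambda}. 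At this substitution I would also use the factorization $P(\tilde T\ge s\mid a,x)=S_C(s\mid a,x)S(s\mid a,x)$, valid under the independent censoring assumption \ref{ass:indepCens}, which is what produces the $S_C$ in the denominator of the target expression.

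The $\partial_\epsilon\Lambda_j$ term is the easy half. After cancelling the factor $S(s\mid a,x)$ against the $S(s\mid a,x)$ in $P(\tilde T\ge s\mid a,x)$, it reduces to a double integral over $\{0\le s\le u\le t^*\}$ involving only $\dd M_j$. A single application of Fubini, swapping $u$ and $s$, replaces $\int_s^{t^*}\dd u$ by the factor $(t^*-s)$. Since this term carries only $\dd M_j(s\mid a,x)=\sum_{i=1,2}\mathbb{1}(i=j)\dd M_i(s\mid a,x)$ and $(t^*-s)=\int_s^{t^*}\mathbb{1}(i=j)\,\dd v$ when $i=j$, this yields precisely the $\mathbb{1}(i=j)$ summand inside $H_{ij}(s,t^*\mid a,x)$.

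The $\partial_\epsilon S$ term is where the main work lies. Because $\partial_\epsilon S_\epsilon(v\mid a,x)$ itself contains the inner integral $\int_0^v (\dd M_1+\dd M_2)/(S_C S)$, integrating it against $\dd\Lambda_j(v\mid a,x)$ on $[0,u]$ and then over $u\in[0,t^*]$ produces a triple integral over the simplex $\{0\le s\le v\le u\le t^*\}$. The key step is to apply Fubini carefully so as to bring the martingale variable $s$ to the outside and read off the coefficient of $\dd M_i(s\mid a,x)$; after using $S(v\mid a,x)\,\dd\Lambda_j(v\mid a,x)=\dd F_j(v\mid a,x)$ and performing the $\dd u$ integration this coefficient is $\int_s^{t^*}(t^*-v)\,\dd F_j(v\mid a,x)$. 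An integration by parts converts this into $-\int_s^{t^*}\{F_j(s\mid a,x)-F_j(v\mid a,x)\}\,\dd v$, where the boundary contribution vanishes at $v=t^*$ and reorganizes at $v=s$; dividing by $S(s\mid a,x)$ gives exactly the second summand of $H_{ij}$. I expect this combined order-of-integration change and integration-by-parts, together with keeping the boundary terms consistent, to be the principal obstacle, as the remaining manipulations are routine substitutions.

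Finally I would add the two contributions, observe that both carry the common factor $\mathbb{1}(A=a,X=x)/\{\pi(a\mid x)f(x)\}$, and recognize the bracketed integrand as
$$
\int_s^{t^*}\left\{\mathbb{1}(i=j)+\frac{F_j(s\mid a,x)-F_j(v\mid a,x)}{S(s\mid a,x)}\right\}\dd v = H_{ij}(s,t^*\mid a,x),
$$
which is the claimed form. Thus the whole argument is an application of the product rule followed by two Fubini steps and a single integration by parts, leaning throughout on Lemma \ref{gat:S} and Lemma \ref{gat:dlambda}.
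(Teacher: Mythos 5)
Your proposal is correct and takes essentially the same route as the paper's proof: product rule applied to $F_j(\cdot\mid a,x)=\int_0^\cdot S\,\dd\Lambda_j$, substitution of Lemmas \ref{gat:dlambda} and \ref{gat:S} (with the factorization $P(\tilde T\ge s\mid a,x)=S(s\mid a,x)S_C(s\mid a,x)$), and Fubini swaps to collect the coefficient of $\dd M_i(s\mid a,x)$ into $H_{ij}$. The only cosmetic difference is bookkeeping: the paper recognizes $\int_u^t S\,\dd\Lambda_j = F_j(t\mid a,x)-F_j(u\mid a,x)$ inside the double integral before integrating over $t$ (two successive Fubini steps), whereas you assemble the full triple integral over $\{s\le v\le u\le t^*\}$ and then use one integration by parts to turn $\int_s^{t^*}(t^*-v)\,\dd F_j(v\mid a,x)$ into $-\int_s^{t^*}\{F_j(s\mid a,x)-F_j(v\mid a,x)\}\,\dd v$, which is an equivalent reshuffling of the same computation.
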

\begin{proof}
    First we note that
    \begin{align} \label{gat:Fjtmp}
        \partial_\epsilon F_{j, \epsilon}(t\mid a, x) = \int_0^t \partial_\epsilon S_\epsilon(s\mid a, x)\Lambda_j(ds\mid a, x) + \int_0^t \partial_\epsilon S(s\mid a, x)\Lambda_{j, \epsilon}(ds\mid a, x).
    \end{align}
    For the first term in \eqref{gat:Fjtmp}, observe
    \begin{align}
    &\int_0^t \partial_\epsilon S_\epsilon(s\mid a, x)\Lambda_j(ds\mid a, x) \\
    = & \frac{ \mathbb{1}(A=a, X=x)}{\pi(a\mid x)f(x)}\left[\int_0^t \int_0^s \frac{-S(s\mid a, x)}{P(\tilde{T}\geq u \mid a, x)} \dd M_1(u\mid a, x)\Lambda_j(s\mid a, x) \right. \nonumber \\
    &\left. \phantom{\frac{ \mathbb{1}(A=a, X=x)}{\pi(a\mid x)f(x)}} + \int_0^t \int_0^s \frac{-S(s\mid a, x)}{P(\tilde{T}\geq u \mid a, x)}\dd M_2(u\mid a, x)\Lambda_j(s\mid a, x)   \right] \nonumber
    \end{align}
    by lemma \ref{gat:S}, and for the second term
    \begin{align}
        &\int_0^t \partial_\epsilon S(s\mid a, x)\Lambda_{j, \epsilon}(ds\mid a, x) \\
        =& \int_0^t \frac{S(s\mid a,x)}{P(\tilde{T} \geq s, a, x)}\left(Q(ds, \Delta = j, a, x) - \mathbb{1}(\tilde{T}\geq s, a, x)\Lambda_j(ds\mid a, x)\right) \nonumber \\
        =& \frac{\mathbb{1}(A=a, X=x)}{\pi(a\mid x)f(x)}\int_0^t \frac{\dd N_j(s)}{S_C(s\mid a,x)} - \frac{\mathbb{1}(\tilde{T}\geq s) \Lambda_j(ds\mid a, x)}{S_C(s\mid a,x)} \nonumber \\
        =& \frac{\mathbb{1}(A=a, X=x)}{\pi(a\mid x)f(x)} \int_0^t \frac{\dd M_j(s\mid a, x)}{S_C(s\mid a,x)} \nonumber
    \end{align}
    by lemma \ref{gat:dlambda}. Plugging into \eqref{gat:Fjtmp} gives
    \begin{align} \label{gat:Fj}
        & \partial_\epsilon F_{j, \epsilon}(t\mid a, x) \nonumber \\
      = &  \frac{\mathbb{1}(A=a, X=x)}{\pi(a\mid x)f(x)} \left[ \int_0^t \frac{\int_u^t-S(s\mid a, x) \Lambda_j(ds\mid a, x)}{P(\tilde{T}\geq u\mid a, x)} \dd M_1(u\mid a, x) \right. \nonumber \\
        & \phantom{\frac{\mathbb{1}(A=a, X=x)}{\pi(a\mid x)f(x)} } + \int_0^t \frac{\int_u^t-S(s\mid a, x) \Lambda_j(ds\mid a, x)}{P(\tilde{T}\geq u\mid a, x)} \dd M_2(u\mid a, x) + \left. \int_0^t \frac{\dd M_j(s\mid a, x)}{S_C(s\mid a,x)} \right] \nonumber \\
      = & \frac{\mathbb{1}(A=a, X=x)}{\pi(a\mid x)f(x)} \left[ \int_0^t \frac{F_j(u\mid a,x) - F_j(t\mid a, x)}{S(u\mid a,x)S_C(u\mid a,x)} \dd M_1(u\mid a, x) \right. \nonumber \\
        & \phantom{\frac{\mathbb{1}(A=a, X=x)}{\pi(a\mid x)f(x)} } + \int_0^t \frac{F_j(u\mid a,x) - F_j(t\mid a, x)}{S(u\mid a,x)S_C(u\mid a,x)} \dd M_2(u\mid a, x) + \left. \int_0^t \frac{\dd M_j(u\mid a, x)}{S_C(u\mid a,x)} \right] \nonumber \\  
     = & \frac{\mathbb{1}(A=a, X=x)}{\pi(a\mid x)f(x)} \sum_{i=1,2} \int_0^t \frac{1}{S_C(u\mid a,x)} \left(\frac{F_j(u\mid a,x) - F_j(t\mid a, x)}{S(u\mid a,x)} + \mathbb{1}(i=j) \right) \dd M_i(u\mid a, x).  
    \end{align}
    Finally, by \eqref{gat:Fj}, we have
    \begin{align}
        & \partial_\epsilon L_{j, \epsilon}(0, t^*\mid a, x) \\
       = & \int_0^{t^*} \partial_\epsilon F_{j, \epsilon}(t\mid a, x) \dd t \nonumber \\
      = &  \int_0^{t^*} \sum_{i=1,2} \int_0^t \frac{1}{S_C(u\mid a,x)} \left(\frac{F_j(u\mid a,x) - F_j(t\mid a, x)}{S(u\mid a,x)} + \mathbb{1}(i=j) \right) \dd M_i(u\mid a, x) \dd t \nonumber \\
        & \quad \times \frac{\mathbb{1}(A=a, X=x)}{\pi(a\mid x)f(x)} \nonumber \\ 
      = & \int_0^{t^*} \sum_{i=1,2} \int_u^{t^*} \left(\frac{F_j(u\mid a,x) - F_j(t\mid a, x)}{S(u\mid a,x)} + \mathbb{1}(i=j) \right) \dd t\frac{1}{S_C(u\mid a,x)} \dd M_i(u\mid a, x) \nonumber \\
        & \times \frac{\mathbb{1}(A=a, X=x)}{\pi(a\mid x)f(x)} \nonumber \\
      = & \frac{\mathbb{1}(A=a, X=x)}{\pi(a\mid x)f(x)} \sum_{i=1,2} \int_0^{t^*} \frac{H_{ij}(s,t^*\mid a, x)}{S_C(s\mid a, x)} \dd M_i(s\mid a, x). \nonumber
    \end{align}
\end{proof}

\begin{proof}[\textbf{Proof of lemma \ref{lem:gatpsi}}]
    We have 
    $$
    \partial_\epsilon \psi_j(P_\epsilon) = \partial_\epsilon \E_{P_\epsilon}\{ \tau_{j,P_\epsilon}(X)\} = \tau_j(X) + \E\{\partial_\epsilon \tau_{j,P_\epsilon}(X) \} - \psi_j(P).
    $$
    Applying lemma \ref{lem:gatLj} gives the result.
\end{proof}

\begin{proof}[\textbf{Proof of lemma \ref{lem:gatvim}}]
    The EIF of $\chi^l(P)$ is given by theorem 3 in \textcite{zm}. By remark 2 in \textcite{zm}, the EIF of $\Gamma_j^l(P)$ is given by $\tilde{\psi}_{\Gamma_j^l}$ provided that the CATE function $\tau_j(x)$ has Gateaux derivative given by $\frac{\mathbb{1}(X=x)}{f(x)}\left[\varphi_j(O) - \tau_j(X) \right]$, which is seen to hold by lemma \ref{lem:gatLj}. The EIF of $\Omega_j^l$ then follows from the chain rule.
\end{proof}

\section{Proof of asymptotic results}\label{app:asympLin}
\begin{alemma}\label{lem:remate_a}
    The remainder term $P\{\varphi_j^a(\hat{\nu}) - \tau_j^a\}$, $a=0,1$, can be represented as
    \begin{align}
         &\E\left\{\sum_{i=1,2} \int_0^{t^*} S(s\mid a, X) \hat{H}_{ij}(s, t^*\mid a, X) \right. \nonumber \\
         & \left. \quad \quad \times \left(1 - \frac{\pi(a\mid X)S_C(s\mid a, X)}{\hat{\pi}(a\mid X)\hat{S}_C(s\mid a, X)} \right)\dd \left[\hat{\Lambda}_i(s\mid a, X) - \Lambda_i(s\mid a, X) \right] \right\} \nonumber
    \end{align}
    where the expectation is taken with respect to an observation $O$, considering the nuisance estimators, $\hat{\nu}$, fixed. Under assumption \ref{ass:nuisDouble} it holds that $P\{\varphi_j(\hat{\nu}) - \tau_j\} = o_p(n^{-1/2})$.
\end{alemma}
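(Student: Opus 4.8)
The plan is to exploit that the uncentered influence function is unbiased at the truth, so the remainder is a pure nuisance drift. Writing $\varphi_j^a(\nu)$ for the $a$-specific summand of $\varphi_j$, namely $\varphi_j^a(\nu)(O) = \tau_j^a(X) + \frac{\mathbb{1}(A=a)}{\pi(a\mid X)}\sum_{i=1,2}\int_0^{t^*}\frac{H_{ij}(s,t^*\mid a,X)}{S_C(s\mid a,X)}\,\dd M_i(s\mid a,X)$ with $\tau_j^a(X) = L_j(0,t^*\mid a,X)$, the martingale increment has conditional mean zero at $\nu_0$, so $P\varphi_j^a(\nu_0) = P\tau_j^a$. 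Hence $P\{\varphi_j^a(\hat\nu) - \tau_j^a\} = P\varphi_j^a(\hat\nu) - P\varphi_j^a(\nu_0)$, and I would compute $P\varphi_j^a(\hat\nu)$ directly, splitting it into the plug-in part $P\{L_j(0,t^*\mid a,X;\hat\nu)\}$ and the expectation of the stochastic-integral part.

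For the stochastic-integral part I would condition on $(A,X)$. The subtlety is that $\dd\hat M_i(s) = \dd N_i(s) - \mathbb{1}(\tilde T\ge s)\hat\Lambda_i(\dd s\mid a,X)$ is \emph{not} a $P$-martingale increment, since its compensator uses $\hat\Lambda_i$. Under Assumption \ref{ass:indepCens} the true at-risk probability factorizes, $P(\tilde T\ge s\mid a,X) = S(s\mid a,X)S_C(s\mid a,X)$, and $P(\dd N_i(s)\mid a,X) = S(s\mid a,X)S_C(s\mid a,X)\Lambda_i(\dd s\mid a,X)$, so the conditional mean of $\dd\hat M_i(s)$ equals $S(s)S_C(s)\,\dd[\Lambda_i - \hat\Lambda_i](s)$. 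Since the kernel $\hat H_{ij}/\hat S_C$ is a fixed function of $(s,a,X)$ it passes through the conditional expectation, and combining $P(A=a\mid X) = \pi(a\mid X)$ against the $1/\hat\pi$ weight turns this term into the ordinary integral $-P\bigl\{\sum_i\int_0^{t^*} S\,\hat H_{ij}\,\frac{\pi S_C}{\hat\pi\hat S_C}\,\dd[\hat\Lambda_i - \Lambda_i]\bigr\}$, already reproducing the $\tfrac{\pi S_C}{\hat\pi\hat S_C}$ part of the target.

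The main work, and the principal obstacle, is an exact identity for the plug-in bias, $L_j(0,t^*\mid a,x;\hat\nu) - L_j(0,t^*\mid a,x;\nu_0) = \sum_{i=1,2}\int_0^{t^*} S(s\mid a,x)\,\hat H_{ij}(s,t^*\mid a,x)\,\dd[\hat\Lambda_i - \Lambda_i](s)$, in which the \emph{true} survival $S$ pairs with the \emph{estimated} kernel $\hat H_{ij}$. I would prove it by writing $\hat L_j - L_j = \int_0^{t^*}(\hat F_j - F_j)(u)\,\dd u$, decomposing $\hat S\,\dd\hat\Lambda_j - S\,\dd\Lambda_j = \hat S\,\dd[\hat\Lambda_j - \Lambda_j] + (\hat S - S)\,\dd\Lambda_j$, and applying Fubini to create the $(t^*-s)$ weights that sit inside $\hat H_{ij}$. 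After collecting terms the discrepancy with the target reduces to the identity $\int_0^{t^*}(\hat S - S)(t^*-s)\,\dd\hat\Lambda_j = \int_0^{t^*}\frac{S}{\hat S}\bigl(\int_s^{t^*}[\hat F_j(s)-\hat F_j(u)]\,\dd u\bigr)\,\dd[\hat\Lambda_\bullet - \Lambda_\bullet]$ with $\Lambda_\bullet = \Lambda_1+\Lambda_2$ (and likewise $\hat\Lambda_\bullet$); substituting the Duhamel equation $\hat S(s) - S(s) = -\int_0^s \frac{\hat S(s)S(v)}{\hat S(v)}\,\dd[\hat\Lambda_\bullet - \Lambda_\bullet](v)$, interchanging the order of integration, and a final integration by parts of $\int_v^{t^*}(t^*-s)\,\dd\hat F_j(s)$ (using $\dd\hat F_j = \hat S\,\dd\hat\Lambda_j$) closes it exactly. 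The care here is entirely the bookkeeping of which factors carry hats and the choice of the Duhamel form placing $\hat S$ inside and $S$ outside; Assumption \ref{ass:nuisPos} ensures the reciprocals are bounded so Fubini and these manipulations are licit.

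Finally, adding the plug-in and martingale contributions, the two first-order integrals against $\dd[\hat\Lambda_i - \Lambda_i]$ combine and the factor $\bigl(1 - \frac{\pi S_C}{\hat\pi\hat S_C}\bigr)$ factors out, yielding exactly the stated representation; the bracket vanishes at $(\hat\pi,\hat S_C) = (\pi,S_C)$, exhibiting the product (second-order) structure. The conclusion $P\{\varphi_j(\hat\nu) - \tau_j\} = o_p(n^{-1/2})$ then follows for $\varphi_j = \varphi_j^1 - \varphi_j^0$ by invoking Assumption \ref{ass:nuisDouble} at $a=0$ and $a=1$.
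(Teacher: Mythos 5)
Your proposal is correct and follows essentially the same route as the paper: the same split of the remainder into the plug-in bias $\E\{\hat{L}_j - L_j\}$ plus the conditional expectation of the stochastic-integral term (computed via iterated expectations and independent censoring, yielding the $\pi S_C/\hat{\pi}\hat{S}_C$ factor), and the same Duhamel--Fubini derivation of the exact identity $\hat{L}_j - L_j = \sum_i \int_0^{t^*} S\,\hat{H}_{ij}\,\dd[\hat{\Lambda}_i - \Lambda_i]$. The only cosmetic difference is your initial pairing of $(\hat{S}-S)$ with $\dd\Lambda_j$ rather than $\dd\hat{\Lambda}_j$, which forces the extra integration-by-parts step $\int_v^{t^*}(t^*-s)\,\dd\hat{F}_j(s) = -\int_v^{t^*}[\hat{F}_j(v)-\hat{F}_j(u)]\,\dd u$ to close the discrepancy identity --- a step that does go through exactly as you claim.
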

\begin{proof}[Proof of lemma \ref{lem:remate_a}]
    Throughout the proof, expectations and conditional expectations will be taken with respect to an observation $O$, considering estimated nuisance parameters fixed. For the first statement we have
    \begin{align} \label{rem:ate_a_expansion}
        P(\phi_j^a(\hat{\nu}) - \tau_j^a) =& \E\left\{ \hat{L}_j(0, t^*\mid a, X) - L_j(0, t^*\mid a, X)\right\} \nonumber \\
      & +  \E\left\{\frac{\mathbb{1}(A=a)}{\hat{\pi}(a\mid X)} \sum_{i = 1,2} \int_0^{t^*} \frac{\hat{H}_{ij}(s, t^*\mid a, X)}{\hat{S}_C(s\mid a, X)} \dd \hat{M}_i(s\mid a, X) \right\} 
    \end{align}
    and we will expand the two terms separately. For the first term note that
    \begin{align} \label{Fj-diff-tmp}
        \hat{F}_j(t \mid a, x) - F_j(t\mid a, x) =& \int_0^t \hat{S}(s\mid a, x) - S(s\mid a, x) \dd \hat{\Lambda}_j(s\mid a,x) \nonumber \\
        &+ \int_0^t S(s\mid a,x) \left[ \hat{\Lambda}_j(s\mid a,x) - \Lambda_j(s\mid a,x) \right].
    \end{align}
    Define 
    $$
    \Lambda(s\mid a,x) = \Lambda_1(s\mid a, x) + \Lambda_1(s\mid a, x),
    $$
    then Duhamel's equations (\cite{gill}) gives 
    $$
    \hat{S}(s\mid a, x) - S(s\mid a, x) = \int_0^s \frac{S(u\mid a, x)}{\hat{S}(u\mid a,x)} \dd \left[ \Lambda(u\mid a,x ) - \hat{\Lambda}(u\mid a,x ) \right]\hat{S}(s\mid a, x).
    $$
    Plugging this into the first term in \eqref{Fj-diff-tmp} gives
    \begin{align}
        & \int_0^t \hat{S}(s\mid a, x) - S(s\mid a, x) \dd \hat{\Lambda}_j(s\mid a,x) \nonumber \\
      = & \int_0^t \int_0^s \frac{S(u\mid a, x)}{\hat{S}(u\mid a,x)} \dd \left[ \Lambda(u\mid a,x ) - \hat{\Lambda}(u\mid a,x ) \right]\hat{S}(s\mid a, x) \dd \hat{\Lambda}_j(s\mid a,x) \nonumber \\
      = & \int_0^t \int_u^t \hat{S}(s\mid a, x) \frac{S(u\mid a, x)}{\hat{S}(u\mid a,x)} \dd \hat{\Lambda}_j(s\mid a,x) \dd \left[ \Lambda(u\mid a,x ) - \hat{\Lambda}(u\mid a,x ) \right] \nonumber \\
      = & \int_0^t \frac{S(u\mid a, x)}{\hat{S}(u\mid a,x)} \left(\hat{F}_j(t\mid a,x) - \hat{F}_j(u\mid a,x) \right) \dd \left[ \Lambda(u\mid a,x ) - \hat{\Lambda}(u\mid a,x ) \right] \nonumber \\
      = & \int_0^t \frac{S(u\mid a, x)}{\hat{S}(u\mid a,x)} \left(\hat{F}_j(t\mid a,x) - \hat{F}_j(u\mid a,x) \right) \dd \left[ \Lambda_1(u\mid a,x ) - \hat{\Lambda}_1(u\mid a,x ) \right] \nonumber \\
        & + \int_0^t \frac{S(u\mid a, x)}{\hat{S}(u\mid a,x)} \left(\hat{F}_j(t\mid a,x) - \hat{F}_j(u\mid a,x) \right) \dd \left[ \Lambda_2(u\mid a,x ) - \hat{\Lambda}_2(u\mid a,x ) \right] \nonumber 
    \end{align}
    Using this expansion, we can write \eqref{Fj-diff-tmp} as
    \begin{align} \label{Fj-diff-final}
        & \hat{F}_j(t \mid a, x) - F_j(t\mid a, x) \nonumber \\
      = & \int_0^t \frac{S(u\mid a, x)}{\hat{S}(u\mid a,x)} \left(\hat{F}_j(u\mid a,x) - \hat{F}_j(t\mid a,x) \right) \dd \left[ \hat{\Lambda}_1(u\mid a,x ) - \Lambda_1(u\mid a,x) \right] \nonumber \\
        & + \int_0^t \frac{S(u\mid a, x)}{\hat{S}(u\mid a,x)} \left(\hat{F}_j(u\mid a,x) - \hat{F}_j(t\mid a,x) \right) \dd \left[ \hat{\Lambda}_2(u\mid a,x ) - \Lambda_2(u\mid a,x) \right] \nonumber \\
        & + \int_0^t S(s\mid a,x) \dd \left[ \hat{\Lambda}_j(s\mid a,x) - \Lambda_j(s\mid a,x) \right] \nonumber \\
      = & \sum_{i=1,2}\int_0^t S(s\mid a,x) \left(\mathbb{1}(i = j) + \frac{\hat{F}_j(s\mid a,x) - \hat{F}_j(t\mid a,x)}{\hat{S}(s\mid a, x)} \right) \dd \left[ \hat{\Lambda}_i(s\mid a,x ) - \Lambda_i(s\mid a,x) \right] 
    \end{align}
    For the second term in \eqref{rem:ate_a_expansion} note that 
    \begin{align}\label{rem:ate_a-ite-expec}
        & \E\left( \int_0^{t^*} \frac{\hat{H}_{ij}(s, t^*\mid A, X)}{\hat{S}_C(s\mid A, X)} \dd \hat{M}_i(s\mid A, X) \ \biggl| \ A, X \right) \nonumber \\
      = &  \int_0^{t^*} \frac{\hat{H}_{ij}(s, t^*\mid A, X)}{\hat{S}_C(s\mid A, X)} \E(\dd N_i(s) \mid A, X) \nonumber \\ 
        & - \int_0^{t^*} \frac{\hat{H}_{ij}(s, t^*\mid A, X)}{\hat{S}_C(s\mid A, X)} \E(\mathbb{1}(\tilde{T}\geq s) \mid A, X) \dd \hat{\Lambda}_i(s\mid A, X) \nonumber \\
      = & \int_0^{t^*} \frac{\hat{H}_{ij}(s, t^*\mid A, X)}{\hat{S}_C(s\mid A, X)} S(s\mid A, X) S_C(s\mid A, X) \dd \left[ \Lambda_i(s\mid A, X) - \hat{\Lambda}_i(s\mid A, X) \right].
    \end{align}
    Hence, by collecting \eqref{Fj-diff-final} and \eqref{rem:ate_a-ite-expec}, and using iterated expectation, we can write \eqref{rem:ate_a_expansion} as
    \begin{align}
         & \E\left\{ \sum_{i=1,2}\int_0^{t^*}\int_0^u S(s\mid a,X) \left(\mathbb{1}(i = j) + \frac{\hat{F}_j(s\mid a,X) - \hat{F}_j(u\mid a,X)}{\hat{S}(s\mid a, X)} \right) \dd \left[ \hat{\Lambda}_i(s\mid a,X ) - 
           \Lambda_i(s\mid a,X) \right] \dd u \right. \nonumber \\
         &  \left. -\frac{\pi(a\mid X)}{\hat{\pi}(a\mid X)} \sum_{i=1,2} \int_0^{t^*} \frac{\hat{H}_{ij}(s, t^*\mid a, X)}{\hat{S}_C(s\mid a, X)} S(s\mid a, X) S_C(s\mid a, X) \dd 
           \left[ \hat{\Lambda}_i(s\mid a, X) - \Lambda_i(s\mid a, X) \right] \right\} \nonumber \\
       = & \E\left\{ \sum_{i=1,2} \int_0^{t^*} S(s\mid a, X) \hat{H}_{ij}(s,t^*\mid a, X) \right. \\
         & \left. \quad \quad \times \left( 1 - \frac{\pi(a\mid X) S_C(s\mid a, X)}{\hat{\pi}(a\mid X) \hat{S}_C(s\mid a, X)} \right) \dd 
           \left[ \hat{\Lambda}_i(s\mid a, X) - \Lambda_i(s\mid a, X) \right] \right\} \nonumber    
    \end{align}
    which gives the first statement. For the second statement note that 
    $$
    P\{\phi^j(\hat{\nu}) - \tau^j\} = P\{\phi^j_1(\hat{\nu}) - \tau^j_1\} - P\{\phi^j_0(\hat{\nu}) - \tau^j_0\}.
    $$
    Applying the representation above along with assumption \ref{ass:nuisDouble} gives the result.
\end{proof}

\begin{alemma}\label{lem:empate}
    Let $g(s\mid a,x) = \pi(a\mid x) S_C(s\mid a,x)$ and $\hat{g}(s\mid a,x) = \hat{\pi}(a\mid x) \hat{S}_C(s\mid a,x)$. Under assumption \ref{ass:nuisPos} and \ref{ass:nuisCons}, the uncentered EIF, $\varphi_j$, is bounded in $L_2(P)$-norm as: 
    \begin{align*}
         \norm{\varphi_j(\hat{\nu}) - \varphi_j(\nu)} \leq & C_1^*\norm{ \sup_{s\leq t^*}\left| \hat{\Lambda}_1(s\mid a, X) - \Lambda_1(s\mid a, X) \right|} \\
         & + C_2^*\norm{ \sup_{s\leq t^*}\left| \hat{\Lambda}_2(s\mid a, X) - \Lambda_2(s\mid a, X) \right|} \\
         & + C_c^*\norm{ \sup_{s\leq t^*}\left| \hat{g}(s\mid a, X) - g(s\mid a, X) \right|}
    \end{align*}
\end{alemma}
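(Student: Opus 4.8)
The plan is to establish a pathwise (in $O$) Lipschitz bound of the uncentered EIF with respect to the nuisance components and then pass to the $L_2(P)$-norm by Minkowski's inequality; the indicators and the randomness of the observed path get absorbed into pathwise constants, leaving only the dependence on $X$ on the right-hand side. Writing $\varphi_j = \varphi_j^1 - \varphi_j^0$, where $\varphi_j^a$ is the per-arm uncentered EIF, the triangle inequality reduces everything to bounding $\norm{\varphi_j^a(\hat\nu) - \varphi_j^a(\nu)}$ for $a=0,1$. I would split $\varphi_j^a = L_j(0,t^*\mid a, X) + R_j^a$, where $R_j^a$ is the martingale-correction term, and rewrite the latter using $g=\pi S_C$ as $R_j^a = \mathbb{1}(A=a)\sum_{i}\int_0^{t^*}\{H_{ij}(s,t^*\mid a,X)/g(s\mid a,X)\}\,\dd M_i(s\mid a,X)$, which matches the consolidated nuisance $g$ appearing in the $C_c^*$ term.

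For the plug-in part, $|\hat L_j(0,t^*\mid a,X) - L_j(0,t^*\mid a,X)| \le \int_0^{t^*}|\hat F_j(u\mid a,X) - F_j(u\mid a,X)|\,\dd u$, so I would invoke the expansion \eqref{Fj-diff-final} from Lemma \ref{lem:remate_a}. Its integrands are products and quotients of the monotone functions $S$, $\hat F_j$ and $1/\hat S$, which are uniformly of bounded variation on $[0,t^*]$ under the positivity in \ref{ass:nuisPos}; integrating by parts then converts $\dd[\hat\Lambda_i - \Lambda_i]$ into $\hat\Lambda_i - \Lambda_i$ and yields $\sup_{t\le t^*}|\hat F_j(t\mid a,x) - F_j(t\mid a,x)| \le C\sum_{i}\sup_{s\le t^*}|\hat\Lambda_i(s\mid a,x) - \Lambda_i(s\mid a,x)|$ uniformly in $x$. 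Taking the $L_2(P)$-norm produces the $C_1^*$ and $C_2^*$ contributions. The same display also gives $\sup_s|\hat S - S| \le \sum_i\sup_s|\hat\Lambda_i - \Lambda_i|$, which I reuse below.

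For the correction term I would telescope $\int\hat\phi_i\,\dd\hat M_i - \int\phi_i\,\dd M_i = \int(\hat\phi_i-\phi_i)\,\dd\hat M_i + \int\phi_i\,\dd(\hat M_i - M_i)$, with $\phi_i = H_{ij}/g$ and $\hat\phi_i = \hat H_{ij}/\hat g$. In the second term $\dd(\hat M_i - M_i) = -\mathbb{1}(\tilde T\ge s)\,\dd[\hat\Lambda_i - \Lambda_i]$, and since $\phi_i(s)\mathbb{1}(\tilde T\ge s)$ is uniformly bounded and of bounded variation (again by \ref{ass:nuisPos}), integration by parts bounds it by $C\sup_s|\hat\Lambda_i - \Lambda_i|$. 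In the first term I would use $|\int(\hat\phi_i-\phi_i)\,\dd\hat M_i| \le \mathrm{TV}(\hat M_i)\,\sup_s|\hat\phi_i(s)-\phi_i(s)|$, where $\mathrm{TV}(\hat M_i)\le 1 + \hat\Lambda_i(t^*\mid a,x)$, and the positivity bound $\hat S(t^*\mid a,x)>\eta$ forces $\hat\Lambda_i(t^*\mid a,x) < \log(1/\eta)$, a uniform constant. Finally I would decompose $\hat\phi_i - \phi_i = (\hat H_{ij}-H_{ij})/\hat g + H_{ij}(g-\hat g)/(\hat g\, g)$: the first summand is controlled by $\sup_s|\hat H_{ij}-H_{ij}| \le C\sum_i\sup_s|\hat\Lambda_i-\Lambda_i|$ via the sup-bounds on $\hat F_j - F_j$ and $\hat S - S$ from the previous paragraph, and the second by $\eta^{-2}\sup_s|\hat g - g|$. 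Collecting the pieces, using \ref{ass:nuisCons} to keep the estimands finite and applying Minkowski gives the stated bound, with $C_1^*, C_2^*, C_c^*$ absorbing the numerical constants and the sum over $a$.

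The main obstacle is the bounded-variation bookkeeping for the stochastic integrals: unlike the absolutely continuous case, $\dd\hat M_i$ is a genuine signed measure, so the sup-norm of an integrand cannot simply be factored out, and I must instead control $\mathrm{TV}(\hat M_i)$ --- which is exactly where the positivity bound $\hat\Lambda_i(t^*) < \log(1/\eta)$ becomes essential --- and verify that every integrand ($\phi_i$, $\phi_i\mathbb{1}(\tilde T\ge s)$, and the integrands of \eqref{Fj-diff-final}) is uniformly of bounded variation so that integration by parts can trade $\dd[\hat\Lambda_i - \Lambda_i]$ for the sup-norm differences. The remainder is routine: repeated use of positivity to keep denominators away from zero, together with the product and quotient rules for total variation.
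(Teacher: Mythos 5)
Your argument is correct, and for the hardest part of the lemma it takes a genuinely different route from the paper. Both proofs agree on the skeleton (split over $a=0,1$, handle the plug-in difference $\hat L_j - L_j$ via the sup-norm estimates $\sup_s|\hat F_j - F_j| + \sup_s|\hat S - S| + \sup_s|\hat H_{ij}-H_{ij}| \lesssim \sum_i \sup_s|\hat\Lambda_i - \Lambda_i|$ and $|H_{ij}|\le 2t^*\eta^{-1}$, which are exactly the paper's \eqref{bound:Fj}, \eqref{bound:Hij} and \eqref{bound:Hijtmp}), but they diverge on the martingale-correction term. The paper splits that term into the $\dd N_i$-integral difference \eqref{emp:2} and the compensator-integral difference \eqref{emp:3}, and for the latter it must compare $\sum_i\int \hat H_{ij}/g\,\dd\hat\Lambda_i$ with $\sum_i\int H_{ij}/g\,\dd\Lambda_i$; it does so by invoking the backward equation (Gill) to prove the identity $H_j(\dd s,t^*\mid a,x)=\sum_i H_{ij}(s,t^*\mid a,x)\Lambda_i(\dd s\mid a,x)$, which collapses the two cause-specific integrals into a single Stieltjes integral $\int (1/g)\,\dd[\hat H_j - H_j]$ amenable to one integration by parts. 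Your telescoping $\int(\hat\phi_i-\phi_i)\,\dd\hat M_i+\int\phi_i\,\dd(\hat M_i-M_i)$ bypasses that identity entirely: the first piece dies against the uniform total-variation bound $\mathrm{TV}(\hat M_i)\le 1+\log(1/\eta)$ --- your observation that positivity of $\hat S$ forces $\hat\Lambda_i(t^*\mid a,x)<\log(1/\eta)$ is precisely the crux, and it is valid under \ref{ass:nuisPos} --- while the second piece is an integral of the fixed, uniformly bounded-variation integrand $\phi_i=H_{ij}/g$ against $\dd[\hat\Lambda_i-\Lambda_i]$ on $[0,t^*\wedge\tilde T]$, which yields to the same integration-by-parts device the paper itself uses in \eqref{bound:Fj} and \eqref{bound:b}. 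What your route buys is elementarity and symmetry (no appeal to the backward/Kolmogorov equation, only total-variation bookkeeping, at the cost of one integration by parts per cause and per integrand); what the paper's route buys is that a single integration by parts against $1/g$ suffices after the compression $\dd H_j=\sum_i H_{ij}\,\dd\Lambda_i$. Two immaterial quibbles: the constant for the summand $H_{ij}(g-\hat g)/(\hat g g)$ should be of order $t^*\eta^{-5}$ rather than $\eta^{-2}$ (since $|H_{ij}|\le 2t^*\eta^{-1}$ and $\hat g\,g\ge\eta^4$), and assumption \ref{ass:nuisCons} plays no role in establishing the displayed inequality --- only \ref{ass:nuisPos} does; consistency is what later turns the right-hand side into $o_p(1)$.
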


\begin{proof}[Proof of lemma \ref{lem:empate}]
    Note that 
    $$
    \norm{\varphi_j(\hat{\nu}) - \varphi_j(\nu)} \leq \norm{\varphi_j^1(\hat{\nu}) - \varphi_j^1(\nu)} + \norm{\varphi_j^0(\hat{\nu}) - \varphi_j^0(\nu)}.
    $$
    Hence we need to show $\norm{\varphi_j^a(\hat{\nu}) - \varphi_j^a(\nu)} = o_p(1)$ for $a=0,1$. Observe that 
    \begin{align}
        & P\{\varphi_j^a(\hat{\nu}) - \varphi_j^a(\nu)\}^2 \nonumber \\
   \leq & 2\E\left\{ \hat{L}_j(0,t^*\mid a, X) - L_j(0,t^*\mid a, X)  \right\}^2 \label{emp:1} \\
        & + 2\E\left\{ \sum_{i=1,2}\int_0^{t^*} \frac{\hat{H}_{ij}(s,t^*\mid a, X)}{\hat{g}(s\mid a, X)} - \frac{H_{ij}(s,t^*\mid a, X)}{g(s\mid a, X)} \dd N_i(s) \right\}^2 \label{emp:2} \\
        & + 2\E\left\{ \sum_{i=1,2}\left[\int_0^{t^*\wedge \tilde{T}} \frac{\hat{H}_{ij}(s,t^*\mid a, X)}{\hat{g}(s\mid a, X)} \dd \hat{\Lambda}_i(s\mid a,X) - \int_0^{t^*\wedge \tilde{T}} \frac{H_{ij}(s,t^*\mid a, X)}{g(s\mid a, X)} \dd \Lambda_i(s\mid a, X) \right] \right\}^2. \label{emp:3}
    \end{align}
    We will deal with each of the terms separately, but first we will derive some results for the consistency of $\hat{F}_j$, $\hat{L}_j$ and $\hat{H}_{ij}$. For $\hat{F}_j$ we have that 
    \begin{align} \label{bound:Fj}
        &\hat{F}_j(t^*\mid a, x) - F_j(t^*\mid a, x) \nonumber \\
      = & \int_0^{t^*} \hat{S}(s\mid a, x) \dd \left[\hat{\Lambda}_j(s\mid a, x) - \Lambda_j(s\mid a, x) \right] + \int_0^{t^*} \hat{S}(s\mid a,x) - S(s\mid a,x) \dd \Lambda_j(s\mid a, x) \nonumber \\
      = & \hat{S}(t^*\mid a,x)\left[ \hat{\Lambda}_j(t^* \mid a,x) - \Lambda_j(t^* \mid a,x) \right] - \hat{S}(0\mid a,x)\left[ \hat{\Lambda}_j(0 \mid a,x) - \Lambda_j(0 \mid a,x) \right] \nonumber \\
        & - \int_0^{t^*} \hat{\Lambda}_j(s\mid a, x) - \Lambda_j(s\mid a, x) \dd \hat{S}(s\mid a, x)  + \int_0^{t^*} \hat{S}(s\mid a,x) - S(s\mid a,x) \dd \Lambda_j(s\mid a, x)  \nonumber \\
   \leq & 2\sup_{s\leq t^*}\left| \hat{\Lambda}_j(s\mid a, x) - \Lambda_j(s\mid a, x) \right| + \hat{\Lambda}_j(t\mid a,x)\sup_{s\leq t^*}\left| \hat{S}(s\mid a, x) - S(s\mid a, x) \right|  \nonumber \\
   \leq & 2\sup_{s\leq t^*}\left| \hat{\Lambda}_j(s\mid a, x) - \Lambda_j(s\mid a, x) \right| \nonumber \\
        & + \log(\eta^{-1}) e^C \sup_{s\leq t^*} \left[  \left| \hat{\Lambda}_1(s\mid a, x) - \Lambda_1(s\mid a, x) \right| + \left| \hat{\Lambda}_2(s\mid a, x) - \Lambda_2(s\mid a, x) \right| \right] \nonumber \\
      = & C_1 \sup_{s\leq t^*}\left| \hat{\Lambda}_1(s\mid a, x) - \Lambda_1(s\mid a, x) \right| + C_2 \sup_{s\leq t^*}\left| \hat{\Lambda}_2(s\mid a, x) - \Lambda_2(s\mid a, x) \right|
    \end{align}
    for constants $C_1>0$ and $C_2 > 0$. The second equality follows from partial integration and the second inequality follows from the mean value theorem with $C>0$ is some value between $\hat{\Lambda}(s\mid a,x)$ and $\Lambda(s\mid a,x)$ together with assumption \ref{ass:nuisPos}.

    From \eqref{bound:Fj} it follows immediately that 
    \begin{align}\label{bound:Lj}
        &\hat{L}_j(0,t^*\mid a, x) - L_j(0,t^*\mid a, x) \nonumber \\
   \leq & t^*C_1 \sup_{s\leq t^*}\left| \hat{\Lambda}_1(s\mid a, x) - \Lambda_1(s\mid a, x) \right| + t^*C_2 \sup_{s\leq t^*}\left| \hat{\Lambda}_2(s\mid a, x) - \Lambda_2(s\mid a, x) \right|.
    \end{align}
    For $\hat{H}_{ij}$, observe
    \begin{align} \label{bound:Hij}
        & \hat{H}_{ij}(s,t^*\mid a,x) - H_{ij}(s,t^*\mid a,x) \nonumber \\
      = & \int_s^{t^*} \frac{\hat{F}_j(s\mid a,x) - F_j(s\mid, a, x) - (\hat{F}_j(u\mid a,x) - F_j(u\mid, a, x))}{\hat{S}(s\mid a,x)} \dd u \nonumber \\
        & + \int_s^{t^*}\left(\frac{1}{\hat{S}(s\mid a,x)} - \frac{1}{S(s\mid a,x)} \right) \left( F_j(s\mid a,x) - F_j(u\mid a, x)  \right) \dd u \nonumber \\
   \leq & t^* \left|\frac{\hat{F}_j(s\mid a,x) - F_j(s\mid a,x)}{\hat{S}(s\mid a,x)} \right| + \eta^{-1}\int_s^{t^*}\left| \hat{F}_j(u\mid a,x) - F_j(u\mid a,x) \right| \dd u \nonumber \\
        & + \eta^{-2} \left| \hat{S}(s\mid a,x) - S(s\mid a,x) \right| \left|F_j(s\mid a,x)(t^* - s) - \int_s^{t^*} F_j(u\mid a,x) \dd u \right| \nonumber \\
   \leq &   C_3 \sup_{s\leq t^*}\left| \hat{\Lambda}_1(s\mid a, x) - \Lambda_1(s\mid a, x) \right| + C_4 \sup_{s\leq t^*}\left| \hat{\Lambda}_2(s\mid a, x) - \Lambda_2(s\mid a, x) \right|     
    \end{align}
    for some constants $C_3>0$ and $C_4 > 0$, which follows from \eqref{bound:Fj}. Now we proceed to bound each of the three terms initially stated. The bound of \eqref{emp:1} follows directly from \eqref{bound:Lj}:
    \begin{align}\label{bound:emp1}
        & 2\E\left\{ \hat{L}_j(0,t^*\mid a, X) - L_j(0,t^*\mid a, X)  \right\}^2 \nonumber \\
        \leq &2\E\left\{ C_5 \sup_{s\leq t^*}\left| \hat{\Lambda}_1(s\mid a, x) - \Lambda_1(s\mid a, x) \right| + C_6 \sup_{s\leq t^*}\left| \hat{\Lambda}_2(s\mid a, x) - \Lambda_2(s\mid a, x) \right|  \right\}^2.
    \end{align}
    For \eqref{emp:2} we have
    \begin{align} \label{bound:emp2}
        &\E\left\{ \sum_{i=1,2}\int_0^{t^*} \frac{\hat{H}_{ij}(s,t^*\mid a, X)}{\hat{g}(s\mid a, X)} - \frac{H_{ij}(s,t^*\mid a, X)}{g(s\mid a, X)} \dd N_i(s) \right\}^2 \nonumber \\
      = & \E\left\{ \sum_{i=1,2} \int_0^{t^*} \frac{\hat{H}_{ij}(s,t^*\mid a, X) - H_{ij}(s,t^*\mid a, X)}{\hat{g}(s\mid a, X)} \dd N_i(s)   \right. \nonumber \\
        & + \left. \int_0^{t^*} H_{ij}(s,t^*\mid a,x) \left(\frac{1}{\hat{g}(s\mid a,X)} - \frac{1}{g(s\mid a,X)} \right) \dd N_i(s)   \right\}^2 \nonumber \\
   \leq & \E\left\{ \sum_{i=1,2} \eta^{-2} \sup_{s\leq t^*}\left| \hat{H}_{ij}(s\mid a,X) - H_{ij}(s\mid a,X) \right| \right. \\
        & \left. + \eta^{-4}\sup_{s\leq t^*} \left|\hat{g}(s\mid a,X) - g(s\mid a,X) \right| \left|H_{ij}(s,t^*\mid a,X) \right| \right\}^2 \nonumber \\
   \leq & \E \left\{ C_7 \sup_{s\leq t^*}\left| \hat{\Lambda}_1(s\mid a, X) - \Lambda_1(s\mid a, X) \right| + C_8 \sup_{s\leq t^*}\left| \hat{\Lambda}_2(s\mid a, X) - \Lambda_2(s\mid a, X) \right| \right. \nonumber \\
        & \left. \phantom{\E\{ \left| \hat{\Lambda} \right| } + C_9 \sup_{s\leq t^*} \left|\hat{g}(s\mid a,X) - g(s\mid a,X) \right|\right\}^2.
    \end{align}
    The first inequality follows from assumption \ref{ass:nuisPos} and the second inequality follows from \eqref{bound:Hij} together with 
    \begin{align}\label{bound:Hijtmp}
        H_{ij}(s,t^*\mid a,x) \leq & \eta^{-1}\int_s^{t^*}\left| F_j(s) - F_j(u) \right| \dd u + t^* - s \nonumber \\
                              \leq & \eta^{-1} 2 t^* 
    \end{align}
    by assumption \ref{ass:nuisPos}. 

    Lastly, for \eqref{emp:3} we have 
    \begin{align}\label{emp:3tmp}
        &\E\left\{ \sum_{i=1,2}\left[\int_0^{t^*\wedge \tilde{T}} \frac{\hat{H}_{ij}(s,t^*\mid a, X)}{\hat{g}(s\mid a, X)} \dd \hat{\Lambda}_i(s\mid a,X) - \int_0^{t^*\wedge \tilde{T}} \frac{H_{ij}(s,t^*\mid a, X)}{g(s\mid a, X)} \dd \Lambda_i(s\mid a, X) \right] \right\}^2 \nonumber \\
      = & E\left\{ \underbrace{\sum_{i=1,2} \left[ \int_0^{t^*\wedge \tilde{T}} \hat{H}_{ij}(s,t^*\mid a, X) \left( \frac{1}{\hat{g}(s\mid a,x)} - \frac{1}{g(s\mid a,x)} \right) \dd \hat{\Lambda}_i(s\mid a,x)  \right]}_{(a)} \right. \nonumber \\
        & \left. + \underbrace{ \int_0^{t^*\wedge \tilde{T}} \sum_{i=1,2} \frac{\hat{H}_{ij}(s,t^*\mid a, X)}{g(s\mid a,X)} \dd \hat{\Lambda}_i(s\mid a,X) - \int_0^{t^*\wedge \tilde{T}} \sum_{i=1,2} \frac{H_{ij}(s\mid a, X)}{g(s\mid a,X)} \dd \Lambda_i(s\mid a,X)}_{(b)} \right\}^2.
    \end{align}
    and we will bound $(a)$ and $(b)$ separately. For $(a)$ we have that for almost all $x$
    \begin{align} \label{bound:a}
        & \int_0^{t^*\wedge \tilde{T}} \hat{H}_{ij}(s, t^*\mid a, x) \left( \frac{1}{\hat{g}(s\mid a,x)} - \frac{1}{g(s\mid a,x)} \right) \dd \hat{\Lambda}_i(s\mid a,x) \nonumber \\
   \leq & \sum_{i=1,2} \sup_{s\leq t^*}\left|\hat{H}_{ij}(s, t^* \mid a, x)\right| \sup_{s\leq t^*}\left| \frac{1}{\hat{g}(s\mid a,x)} - \frac{1}{g(s\mid a,x)} \right| \hat{\Lambda}_i(t^*\mid a,x) \nonumber \\
   \leq & 4\eta^{-5}t^*\log(\eta) \sup_{s\leq t^*}\left| \hat{g}(s\mid a,x) - g(s\mid a,x) \right|.
    \end{align}
    by \eqref{bound:Hijtmp} together with assumption \ref{ass:nuisPos}. Next, to control (b), define 
    $$
    H_j(s, t^* \mid a,x) = \int_s^{t^*} \frac{F_j(s\mid a,x) - F_j(u\mid a,x)}{S(s)} \dd u
    $$
    and let 
    $$\Lambda(s\mid a,x) = \Lambda_1(s\mid a,x) + \Lambda_2(s\mid a,x)$$
    such that $S(s\mid a,x) = e^{-\Lambda(s\mid a,x)}$. Observe that 
    \begin{align*}
        & H_j(s,t^*\mid a,x) \\
      = & - \int_s^{t^*} \frac{F_j(u\mid a,x) - F_j(s\mid a,x)}{S(s)} \dd u \\
      = & -\int_s^{t^*} \int_s^u \frac{S(v\mid a,x)}{S(s\mid a,x)} \Lambda_j(dv\mid a,x) \dd u \\
      \overset{(*)}{=} & - \int_s^{t^*} \int_s^u \left(1 - \int_s^v \frac{S(v\mid a,x)}{S(w\mid a,x)} \Lambda(dw\mid a,x) \right) \Lambda_j(dv\mid a,x) \dd u \\
      = & - \left[ \int_s^{t^*} \int_s^u  \Lambda_j(\dd v\mid a,x) \dd u - \int_s^{t^*} \int_s^u \int_s^v \frac{S(v\mid a,x)}{S(w\mid a,x)} \Lambda(\dd w\mid a,x) \Lambda_j(\dd v\mid a,x) \dd u  \right]\\
      = & - \left[ \int_s^{t^*} \int_v^{t^*} \dd u \Lambda_j(\dd v\mid a,x) - \int_s^{t^*} \int_w^{t^*} \int_w^u \frac{S(v\mid a,x)}{S(w\mid a,x)}  \Lambda_j(dv\mid a,x) \dd u \Lambda(dw\mid a,x) \right] \\
      = & - \left[ \int_s^{t^*} \int_v^{t^*} \dd u \Lambda_j(\dd v\mid a,x) - \int_s^{t^*} \int_w^{t^*} \frac{F_j(u\mid a,x) - F_j(w\mid a,x)}{S(w\mid a,x)} \dd u \Lambda(dw\mid a,x) \right] \\
      = & - \int_s^{t^*} \sum_{i=1,2} H_{ij}(w,t^*\mid a,x) \Lambda_i(\dd w\mid a,x)
    \end{align*}
    where $(*)$ follows from the backward equation (theorem 5, \cite{gill}). Hence 
    $$H_j(\dd s,t^*\mid a,x) = \sum_{i=1,2} H_{ij}(s,t^*\mid a,x) \Lambda_i(\dd s\mid a,x).$$
    From this expression we can derive a bound for $(b)$ using integration by parts:  
    \begin{align} \label{bound:b}
         & \int_0^{t^*\wedge \tilde{T}} \sum_{i=1,2} \frac{\hat{H}_{ij}(s,t^*\mid a, x)}{g(s\mid a,x)} \dd \hat{\Lambda}_i(s\mid a,x) - \int_0^{t^*\wedge \tilde{T}} \sum_{i=1,2} \frac{H_{ij}(s\mid a, x)}{g(s\mid a,x)} \dd \Lambda_i(s\mid a,x) \nonumber\\
       = & \int_0^{t^*\wedge \tilde{T}} \frac{1}{g(s\mid a,x)} \dd \left[\hat{H}_j(s, t^*\mid a,x) - H_j(s, t^*\mid a,x) \right] \nonumber \\
       = & \frac{1}{g(t^*\mid a,x)} \left[\hat{H}_j(t^*\wedge \tilde{T}, t^*\mid a,x) - H_j(t^*\wedge \tilde{T}, t^*\mid a,x) \right] \nonumber \\
         & - \frac{1}{g(0\mid a,x)}\left[\hat{H}_j(0, t^*\mid a,x) - H_j(0, t^*\mid a,x) \right] \nonumber\\
         & - \int_0^{t^*\wedge \tilde{T}} \left[\hat{H}_j(s, t^*\mid a,x) - H_j(s, t^*\mid a,x) \right] \left(\frac{1}{g} \right)(ds\mid a,x) \nonumber \\
    \leq & 3\eta^{-1}\sup_{s\leq t^*} \left|\hat{H}_j(s, t^*\mid a,x) - H_j(s, t^*\mid a,x) \right| \nonumber \\
    \leq & C_{10} \sup_{s\leq t^*}\left| \hat{\Lambda}_1(s\mid a, x) - \Lambda_1(s\mid a, x) \right| + C_{11} \sup_{s\leq t^*}\left| \hat{\Lambda}_2(s\mid a, x) - \Lambda_2(s\mid a, x) \right|
    \end{align}
    for some constants $C_{10}>0$ and $C_{11}>0$. Applying the bounds \eqref{bound:a} and \eqref{bound:b} to \eqref{emp:3tmp} gives that \eqref{emp:3} is bounded by
    \begin{align}\label{bound:emp3}
        & E\left\{ C_{10} \sup_{s\leq t^*}\left| \hat{\Lambda}_1(s\mid a, X) - \Lambda_1(s\mid a, X) \right| + C_{11} \sup_{s\leq t^*}\left| \hat{\Lambda}_2(s\mid a, X) - \Lambda_2(s\mid a, X) \right| \right. \nonumber \\
        & + \left. C_{12} \sup_{s\leq t^*}\left| \hat{g}_c(s\mid a, X) - g(s\mid a, X) \right|  \right\}^2.
    \end{align}
    Thus, applying \eqref{bound:emp1}, \eqref{bound:emp2} and \eqref{bound:emp3} to \eqref{emp:1}, \eqref{emp:2} and \eqref{emp:3}, respectively, gives the result.

\end{proof}

\begin{proof}[\textbf{Proof of Theorem \ref{thm:anATE}}]
    Consider the decomposition
    \begin{align*}
        \mathbb{P}_n^k \varphi(\hat{\nu}_{-k}) = \mathbb{P}_n^k\tilde{\psi}_{\psi_j} + (\mathbb{P}_n^k - P)(\varphi(\hat{\nu}_{-k}) - \varphi(\nu)) + P\varphi(\hat{\nu}_{-k})
    \end{align*}
    such that 
    \begin{align*}
        \hat{\psi}_j^{CF} - \psi_j &= \sum_{k=1}^K \frac{n_k}{n}\mathbb{P}_n^k\varphi(\hat{\nu}_{-k}) - \psi_j \\
        &= \mathbb{P}_n\tilde{\psi}_{\psi_j} + \underbrace{\sum_{k=1}^K \frac{n_k}{n}(\mathbb{P}_n^k - P)(\varphi(\hat{\nu}_{-k}) - \varphi(\nu))}_{\text{empirical process term}} + \underbrace{\sum_{k=1}^K \frac{n_k}{n} P(\varphi(\hat{\nu}_{-k}) - \psi_j)}_{\text{remainder term}}.
    \end{align*}
    Now, if both the empirical process term and the remainder term in the above display are $o_p(n^{-1/2})$, it follows that $\hat{\psi}_j^{CF}$ is asymptotically linear with influence function given by $\tilde{\psi}_j$. By Proposition 2 in \textcite{kennedydouble} this is achieved if $\norm{\varphi(\hat{\nu}_{-k}) - \varphi(\nu)}=o_p(1)$ for each $k$ and if the remainder is $o_p(n^{-1/2})$. Under assumption \ref{ass:nuis} for each $k$, the former is achieved by lemma \ref{lem:empate} and the latter is achieved by lemma \ref{lem:remate_a}. An application of the central limit theorem together with Slutsky's lemma gives the convergence in distribution. 
\end{proof}

\begin{proof}[\textbf{Proof of Theorem \ref{thm:anbplp}}]
    We will show that $\hat{\Gamma}_j^{l,CF}$ and $\hat{\chi}_j^{l,CF}$ are asymptotically linear with influence function given by $\tilde{\psi}_{\Gamma_j^l}$ and $\tilde{\psi}_{\chi_j^l}$, respectively. Since $\hat{\Omega}_j^{l,CF}$ is a ratio of the two, it follows from the functional delta method that it is asymptotically linear with influence function given by $\tilde{\psi}_{\Omega_j}$ (\cite{Vaart}, Ch. 25.7). 
    
    That $\hat{\chi}^{l,CF}$ is asymptotically linear follows from Theorem 5 in \textcite{zm}. To show that $\hat{\Gamma}_j^{l,CF}$ is asymptotically linear, we consider the decomposition 
    \begin{align*}
        \mathbb{P}_n^k \phi_{\Gamma_j^l}(\hat{\nu}_{-k}) = \mathbb{P}_n^k\tilde{\psi}_{\Gamma_j^l} + (\mathbb{P}_n^k - P)(\phi_{\Gamma_j^l}(\hat{\nu}_{-k}) - \phi_{\Gamma_j^l}(\nu)) + P\phi_{\Gamma_j^l}(\hat{\nu}_{-k})
    \end{align*}
    such that 
    \begin{align*}
        \hat{\Gamma}_j^{l,CF} - \psi_j &= \sum_{k=1}^K \frac{n_k}{n}\mathbb{P}_n^k\phi_{\Gamma_j^l}(\hat{\nu}_{-k}) - \Gamma_j^l \\
        &= \mathbb{P}_n\tilde{\psi}_{\Gamma_j^l} + \underbrace{\sum_{k=1}^K \frac{n_k}{n}(\mathbb{P}_n^k - P)(\phi_{\Gamma_j^l}(\hat{\nu}_{-k}) - \phi_{\Gamma_j^l}(\nu))}_{\text{empirical process term}} + \underbrace{\sum_{k=1}^K \frac{n_k}{n} P(\phi_{\Gamma_j^l}(\hat{\nu}_{-k}) - \Gamma_j^l)}_{\text{remainder term}}.
    \end{align*}
    Again, by Proposition 2 in \textcite{kennedydouble}, the desired asymptotic linearity follows if we can show that $\norm{\phi_{\Gamma_j^l}(\hat{\nu}_{-k}) - \phi_{\Gamma_j^l}(\nu)} = o_p(1)$ for each $k$ and that the remainder term is $o_p(n^{-1/2})$. For both results, we will consider arguments that are similar to the ones given in the proof of Theorem 4 in \textcite{zm}. \\ \\
    \textit{Empirical process term} \\ \\
    Consider the following expansion for a given $k$:
    \begin{align*}
        \phi_{\Gamma_j^l}(\hat{\nu}_{-k})(O) - \phi_{\Gamma_j^l}(\nu)(O) 
      =& [\varphi_j(\hat{\nu}_{-k})(O) - \varphi_j(\nu)(O)][X_l - \hat{E}^l_{-k}(X_{-l})] \\
    &- [\hat{\tau}_{j,-k}^l(X_{-l}) - \tau_j^l(X_{-l})][X_l - \hat{E}_{-k}^l(X_{-l})] \\
    &- [\hat{E}_{-k}^l(X_{-l}) - E(X_l\mid X_{-l})][\varphi_j(\nu)(O) - \tau_j^l(X_{-l})].  
    \end{align*}
    For the first term we have
    $$
    \E\left\{[\varphi_j(\hat{\nu}_{-k})(O) - \varphi_j(\nu)(O)][X_l - \hat{E}^l_{-k}(X_{-l})]\right\}^2 \leq M \E\left\{\varphi_j(\hat{\nu}_{-k})(O) - \varphi_j(\nu)(O)\right\}^2 = o_p(1),
    $$
    where the inequality follows from assumption (i) and the equality follows from lemma \ref{lem:empate}, since we have assumed that assumption \ref{ass:nuisCons} holds for each $k$. Consistency in $L_2(P)$ of the second term follows by analogous arguments, replacing assumption \ref{ass:nuisCons} with assumption (ii). Consistency of the third term in $L_2(P)$ follows from assumption (iii) if $[\varphi_j(\nu)(O) - \tau_j^l(X_{-l})]^2$ is bounded almost surely. To that end, observe that for almost all $o \in \mathcal{O}$ we have 
    \begin{align*}
        &[\varphi_j(\nu)(o) - \tau_j^l(x_{-l})]^2 \\
        \leq & 2\left[\tau(x) - \tau_j^l(x_{-l})\right]^2 + 4 \left[ \sum_{i=1,2}\frac{\mathbb{1}(A=1)}{\pi(1\mid x)}\int_0^{t^*} \frac{H_{ij}(s,t^*\mid 1, x)}{S_c(s\mid 1,x)}\dd M_i(s\mid,a,x) \right]^2 \\
        & + 4\left[ \sum_{i=1,2} \frac{\mathbb{1}(A=0)}{\pi(0\mid x)}\int_0^{t^*} \frac{H_{ij}(s,t^*\mid 0, x)}{S_c(s\mid 0,x)}\dd M_i(s\mid,0,x) \right]^2.
    \end{align*}
    Hence consistency of the empirical process term follows, if we can bound each term in the display above. The first term is clearly bounded, since $\tau$ is bounded by $t^*$. For the second and third term observe that 
    \begin{align*}
        &\left[ \sum_{i=1,2} \frac{\mathbb{1}(A=a)}{\pi(a\mid x)}\int_0^{t^*} \frac{H_{ij}(s,t^*\mid a, x)}{S_c(s\mid a,x)}\dd M_i(s\mid,a,x) \right]^2 \\
      \leq & 2\eta^{-2}\left[ \sum_{i=1,2} \int_0^{t^*} H_{ij}(s,t^*\mid a, x)\dd N_i(s\mid,a,x) \right]^2 \\
           & + 2\eta^{-2}\left[ \sum_{i=1,2} \int_0^{t^*} H_{ij}(s,t^*\mid a, x)\mathbb{1}(\tilde{T}\geq s) \dd \Lambda_i(s\mid,a,x) \right]^2 \\
      \leq & 2 \eta^{-2}\left[ \sum_{i=1,2} 2\eta^{-1}t^* \right]^2 + 2\eta^{-2}\left[  2\eta^{-1}t^* \sum_{i=1,2} \int_0^{t^*} \mathbb{1}(\tilde{T}\geq s) \dd \Lambda_i(s\mid,a,x) \right]^2 \\
         = & 32 \eta^{-4}(t^*)^2 + 16\eta^{-4} (t^*)^2 \left[ \int_0^{t^*} \mathbb{1}(\tilde{T}\geq s) \dd \Lambda(s\mid,a,x) \right]^2 \\
      \leq & \eta^{-4}(t^*)^2(32 + 16 \log(\eta^{-1})^2)
    \end{align*}
    where the first and third inequality follows from assumption \ref{ass:nuisPos} and the second inequality follows from \eqref{bound:Hijtmp}. Thus it follows that $\norm{\phi_{\Gamma_j^l}(\hat{\nu}_{-k}) - \phi_{\Gamma_j^l}(\nu)} = o_p(1)$ for each $k$. \\ \\
    \textit{Remainder term} \\ \\
    As in \textcite{zm}, we consider the decomposition
    \begin{align} \label{eq:remgam}
    P\{ \phi_{\Gamma_j^l}(\hat{\nu}_{-k}) - \phi_{\Gamma_j}(\nu) \} = & E\left\{ [\varphi(\hat{\nu}_{-k})(O) - \varphi_j(\nu)(O)][X_l - \hat{E}_{-k}^l(X_{-l})] \right\} \nonumber \\
    &- \E\left\{[\hat{\tau}_{j,-k}^l(X_{-l}) - \tau_j^l(X_{-l})][X_j - \hat{E}_{-k}^j(X_{-l})]\right\} \nonumber \\
    &-\E\left\{ [\hat{E}_{-k}^l(X_{-j}) - E(X_l\mid X_{-l})][\varphi_j(\nu)(O) - \tau_j^l(X_{-l})]  \right\}
\end{align}
and we want to show that each term is $o_p(n^{-1/2})$. For the third term we note that 
\begin{align*}   
&E(\varphi_j(\nu)(O) - \tau_j^l(X_{-l})\mid A,X) \\
=& \left(\frac{\mathbb{1}(A=1)}{\pi(1\mid X)} - \frac{\mathbb{1}(A=0)}{\pi(0\mid X)}\right)\sum_{i=1,2} E\left( \int_0^{t^*} \frac{H_{ij}(s,t^*\mid A, X)}{S_C(s\mid A, X)} \dd M_i(s\mid A, X) \bigg| A, X \right) \\
=& 0
\end{align*}
since the integral is a martingale conditional on $A$ and $X$, and hence, that the third term is equal to 0 by iterated expectation. The second term is $o_p(n^{-1/2})$ by Cauchy-Schwarz together with assumption (ii) and (iii). Following the derivations in the proof of \ref{lem:remate_a}, we can use iterated expectation to write the first term as
\begin{align*}
    &\E\left\{[X_l - \hat{E}_{-k}^l(X_{-l})]\sum_{a=0,1}\sum_{i=1,2} \int_0^{t^*} S(s\mid a, X) \hat{H}_{ij}(s, t^*\mid a, X) \right. \\
    & \left. \phantom{\{[X_l - \hat{E}_{-k}^l(X_{-l})]\sum_{a=0,1}\sum_{i=1,2}} \times \left(1 - \frac{\pi(a\mid X)S_C(s\mid a, X)}{\hat{\pi}(a\mid X)\hat{S}_C(s\mid a, X)} \right)\dd \left[ 
        \hat{\Lambda}_i(s\mid a, X) - \Lambda_i(s\mid a, X) \right] \right\}.
\end{align*}
By assumption (i) this is bounded by
\begin{align*}
&\sqrt{M}\sum_{a=0,1}\E\left\{\left| \sum_{i=1,2} \int_0^{t^*} S(s\mid a, X) \hat{H}_{ij}(s, t^*\mid a, X) \right. \right. \\
 & \left. \left. \phantom{S(s\mid a, X) \hat{H}_{ij}(s, t^*\mid a, X) } \times \left(1 - \frac{\pi(a\mid X)S_C(s\mid a, X)}{\hat{\pi}(a\mid X)\hat{S}_C(s\mid a, X)} \right)\dd \left[ 
        \hat{\Lambda}_i(s\mid a, X) - \Lambda_i(s\mid a, X) \right] \right| \right\}
\end{align*}
which is $o_p(n^{-1/2})$ by assumption \ref{ass:nuisDouble}.

We have now shown that summands of the empirical process term and remainder term are $o_p(n^{-1/2})$ for each $k$, and hence it follows from Proposition 2 in \textcite{kennedydouble} that $\hat{\Gamma}_j^{l,CF}$ is asymptotically normal with influence function given by $\tilde{\psi}_{\Gamma_j^l}$. The convergence in distribution of $\sqrt{n}(\hat{\Omega}_j^{l,CF} - \Omega_j^l)$ follows from the central limit theorem together with Slutsky's lemma. 
\end{proof}
\end{document}